\newcommand\mymatrixbraceoffseth{0.5em}
\newcommand\mymatrixbraceoffsetv{0.2em}
\newcommand*\mymatrixbraceright[4][m]{
    \draw[mymatrixbrace] ($(#1.north west)!(#1-#3-1.south west)!(#1.south west)-(\mymatrixbraceoffseth,0)$)
        -- node[left=2pt] {#4} 
        ($(#1.north west)!(#1-#2-1.north west)!(#1.south west)-(\mymatrixbraceoffseth,0)$);
}
\newcommand*\mymatrixbracetop[4][m]{
    \draw[mymatrixbrace] ($(#1.north west)!(#1-1-#2.north west)!(#1.north east)+(0,\mymatrixbraceoffsetv)$)
        -- node[above=2pt] {#4} 
        ($(#1.north west)!(#1-1-#3.north east)!(#1.north east)+(0,\mymatrixbraceoffsetv)$);
}
\crefname{hypothesis}{Hypothesis}{Hypotheses}
\Crefname{ALC@unique}{Line}{Lines}
\colorlet{texcscolor}{blue!50!black}
\colorlet{texemcolor}{red!70!black}
\colorlet{texpreamble}{red!70!black}
\colorlet{codebackground}{black!25!white!25}
\lstdefinestyle{siamlatex}{%
  style=tcblatex,
  texcsstyle=*\color{texcscolor},
  texcsstyle=[2]\color{texemcolor},
  keywordstyle=[2]\color{texemcolor},
  moretexcs={cref,Cref,maketitle,mathcal,text,headers,email,url},
}
\DeclareTotalTCBox{\code}{ v O{} }
{ 
  fontupper=\ttfamily\color{black},
  nobeforeafter,
  tcbox raise base,
  colback=codebackground,colframe=white,
  top=0pt,bottom=0pt,left=0mm,right=0mm,
  leftrule=0pt,rightrule=0pt,toprule=0mm,bottomrule=0mm,
  boxsep=0.5mm,
  #2}{#1}
\patchcmd\newpage{\vfil}{}{}{}
\newcommand{\ca}[1]{\mathcal{#1}}
\newcommand{\bb}[1]{\mathbb{#1}}
\newcommand{\rank}{\text{rank}}
\newcommand{\leqnomode}{\tagsleft@true}
\newcommand{\reqnomode}{\tagsleft@false}
\let\lra\Leftrightarrow
\DeclarePairedDelimiter\floor{\lfloor}{\rfloor}
 \newtheorem{observation}[theorem]{Observation}
\newtheorem{example1}{Example}
\title{On Topological and Metrical Properties of
Stabilizing Feedback Gains:
the MIMO Case\thanks{Technical Report; Robotics, Aerospace, and Information Networks Laboratory, Department of Aeronautics and Astronautics, University of Washington, April 4, 2019}}
\author{Jingjing Bu\thanks{Department of Electrical \& Computer Engineering, University of Washington, Seattle, WA 98195}
\and Afshin Mesbahi\thanks{Department of Aeronautics and Astronautics, University of Washington, Seattle, WA 98195}
\and Mehran Mesbahi\thanks{Department of Aeronautics and Astronautics, University of Washington, Seattle, WA 98195}}
\begin{document}
\maketitle

\begin{tcbverbatimwrite}{tmp_\jobname_abstract.tex}
\begin{abstract}
 In this paper, we discuss various topological and metrical aspects of the set of stabilizing static feedback gains for multiple-input-multiple-output (MIMO) linear-time-invariant (LTI) systems, in both continuous and discrete-time. Recently,  connectivity properties of this set (for continuous time) have been reported in the literature, along with a discussion on how this connectivity is affected by restricting the feedback gain to linear subspaces. 
 We show that analogous to the continuous-time case, one can construct instances where the set of stabilizing feedback gains for discrete time LTI systems has exponentially many connected components.
\end{abstract}

\begin{keywords}
Feedback Stabilization; MIMO LTI Systems; Topological Properties
\end{keywords}

\end{tcbverbatimwrite}
\input{tmp_\jobname_abstract.tex}

\section{Introduction}
\label{sec:intro}
The precise determination of topological and metrical properties of stabilizing feedback gains is of fundamental   importance in classical and modern control theory~\cite{Byrnes_book_1980, Hermann_TAC_1977no, ohara1992differential,Vidyasagar_TAC_1982}.
These properties have recently received renewed interest due to the emergence of
learning type algorithms for control synthesis; in fact, these properties can identify some of the limitations of such algorithms.
For example, when the set of stabilizing feedback gains contains two or more path-connected components, 
the performance of gradient-based algorithms for control synthesis is highly dependent on the selection of the initial gain, and the algorithm may converge to a ``poor'' local minimum.

Despite the longstanding interest in the set of stabilizing feedback gains, their topological and metrical properties have received limited attention in the literature.
For example, differential geometric structures of stable state feedback gains for MIMO systems have been studied using  information geometry in~\cite{ohara1992differential}.
An elegant geometric approach has also shown that the sets of stabilizing feedback gains for continuous and discrete single-input-single-output (SISO) and dyadic systems are bounded by two hyperplanes and three hyperplanes, respectively in~\cite{Prakash_IJC_1983} and~\cite{Fam_TAC_1978}.
Most of the geometric properties of stabilizing gains can not be easily extended from SISO to MIMO systems, as the proposed geometric approach is applicable when the coefficients of the corresponding characteristic polynomials are linear functions of the entries of the feedback gain.
Moreover, it is shown that the set of \emph{stable SISO systems} of order $n$ can not only be non-convex but also disconnected with $n+1$ connected components in the Euclidean topology~\cite{ober1987topology}.
In fact, for a special class of MIMO systems, one can end up with an exponential number of connected components~\cite{feng2018on}.\footnote{That is, exponential in the number of states.}
We note that the aforementioned results, e.g., \cite{feng2018on,ober1987topology},
are for continuous-time systems.

This paper discusses the topological, metrical, and geometric properties of the set of stabilizing feedback gains for both continuous-time and discrete-time MIMO LTI systems. Some results are generalizations from the SISO case, while most of them require different lines of reasoning. In addition, we discuss properties of the set of structured stabilizing controllers. First, we review the connectedness of the set of stabilizing state-feedback gains for a continuous MIMO system, as recently reported in~\cite{feng2018on} and discuss their topological and metrical properties: open, contractible and unbounded. In the meantime, the set of stabilizing output-feedback gains is shown to be open, unbounded, and in general, not connected.\footnote{Some of our results are presented as ``observations'' as we suspect they might have been observed in the earlier system theory literature.}

It should be noted that the separate treatment for continuous and discrete time systems is warranted;
in fact, in contrast to the folklore expectation of unified properties for continuous and discrete time systems, there are counterexamples to show that the analogies between the two are
far from complete~\cite{Jonckheere_TAC_1981, Wimmer_JMSEC_1995}.
The distinct difference between continuous and discrete LTI systems might be due to the fact that the generalized bilinear transform has poles and thus not continuous~\cite{Hung_LAA_1998,Mehrmann_LAA_1996}. Therefore, generalizing the proposed topological properties of the set of stabilizing feedback gains from continuous LTI systems to discrete ones is not straightforward. This is especially the case for the structured case. For example, it was proposed~\cite{feng2018on} that by employing Schwarz matrices, one may generate an instance where the set of structured Hurwitz stabilizing feedback gains has exponentially many path-connected components. However, as far as we know, there is no counterpart of this phenomena  for the set of Schur stable matrices. Nevertheless, we show that for discrete MIMO systems, the set of stabilizing state-feedback system is open, unbounded and contractible in the Euclidean topology. For output-feedback system, the set is open, but could be either bounded or unbounded and not path-connected. Furthermore, 
we present a simple construction to demonstrate that the number of connected components also depends exponentially on the number of states for discrete LTI MIMO systems for structured feedback gains. \par 
The paper is structured as follows: \S\ref{sec:notations} provides preliminary background and notation; \S\ref{sec:hurwitz} discusses various topological and metrical properties of the set of Hurwitz stabilizing set; \S\ref{sec:schur} is devoted to the set-theoretic properties of the set of Schur stabilizing feedback gains for discrete LTI systems; \S\ref{sec:conclusion} provides concluding remarks.
\section{Notation and Preliminaries}
\label{sec:notations}
We denote by ${\bb M}_n(\bb R)$ the set of $n \times n$ real matrices and ${\bb {GL}}_n(\bb R)$ as its subset of invertible matrices; $\chi_A$ denotes the characteristic polynomial of a square matrix $A \in \bb M_n(\bb R)$; $\bb R^n$ and $\bb C^n$ denote the $n$-dimensional real and complex Euclidean spaces with $n=1$ identified with real and complex numbers. 
 For a vector $v \in \bb R^n$, we use $v_j$ to denote the $j^{\text{th}}$ entry of $v$, where $v = (v_1, \dots, v_n)^\top$. 
The spectrum of a matrix $M$, denoted by $\text{Sp}(M)$ consists  of $n$ complex numbers $\{\lambda_1, \dots, \lambda_n\}$, where each eigenvalue is repeated by its multiplicity. Mind that $\text{Sp}(M)$ is not a well-defined object in $\bb C^n$ as we do not impose a natural ordering amongst the complex $n$-tuple. Thus if $\text{Sp}(M) = \{\lambda_1, \dots, \lambda_n\}$ with each $\lambda_j \in \bb C$, $\sigma \text{Sp}(M) = \{\lambda_{\sigma(1)}, \dots, \lambda_{\sigma(n)}\}$ denotes the same set of eigenvalues of $M$ for every permutation $\sigma$ in the permutation group $S_n$. Hence $\text{Sp}(M)$ is more naturally viewed as an element of the quotient space $\bb C^n/S_n$, where the underlying equivalence relation $u \sim v$ is via,
\begin{align*}
  u = (u_1, \dots, u_n)^\top = (v_{\sigma(1)}, \dots, v_{\sigma(n)})^\top,
\end{align*}
for some $\sigma \in S_n$;
endow this quotient space $\bb C^n/S_n$ with a quotient topology induced by the canonical projection $\pi:~\bb C^n \to \bb C^n/S_n$. \par
The following result will subsequently be used.
\begin{theorem}{(Theorem $5.2$ in~\cite{serre2002matrices})}
  \label{thrm:continuous_spectrum}
  The map $M \mapsto \text{Sp}(M)$ is continuous.
\end{theorem}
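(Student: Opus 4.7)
The plan is to factor the map $M \mapsto \text{Sp}(M)$ as the composition of the coefficient-of-characteristic-polynomial map with the ``root extraction'' map from monic polynomials to unordered root tuples, and to argue that each factor is continuous. Specifically, let $\chi: \bb M_n(\bb C) \to \bb C^n$ send $M$ to the tuple of coefficients of $\chi_M(t) = t^n + a_{n-1}(M) t^{n-1} + \cdots + a_0(M)$; since each $a_k(M)$ is a polynomial in the entries of $M$, this map is continuous. Next, define the Viète map $V: \bb C^n \to \bb C^n$ by
\begin{equation*}
V(z_1, \dots, z_n) = \bigl((-1)^n e_n(z), (-1)^{n-1} e_{n-1}(z), \dots, -e_1(z)\bigr),
\end{equation*}
where $e_k$ is the $k$-th elementary symmetric polynomial, so that $\prod_{i=1}^n (t - z_i) = t^n - e_1(z) t^{n-1} + \cdots + (-1)^n e_n(z)$. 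Since $V$ is $S_n$-invariant, it descends through the canonical projection $\pi: \bb C^n \to \bb C^n/S_n$ to a well-defined map $\tilde V: \bb C^n/S_n \to \bb C^n$ which is continuous by the universal property of the quotient topology. By the fundamental theorem of algebra $\tilde V$ is bijective, and we clearly have $\text{Sp} = \tilde V^{-1} \circ \chi$.

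The crux is therefore to prove that $\tilde V^{-1}$ is continuous, i.e.\ that $\tilde V$ is a homeomorphism. For this I would verify that $\tilde V$ is a proper map between locally compact Hausdorff spaces, from which the conclusion follows by the standard fact that a continuous proper bijection between such spaces is a homeomorphism (equivalently, a closed map). Properness of $\tilde V$ reduces, via $\tilde V^{-1}(K) = \pi(V^{-1}(K))$, to showing that $V^{-1}(K)$ is compact for every compact $K \subset \bb C^n$. Boundedness of $V^{-1}(K)$ follows from a classical root bound such as Cauchy's: any root $z$ of $t^n + a_{n-1} t^{n-1} + \cdots + a_0$ satisfies $|z| \le 1 + \max_k |a_k|$, and the coefficients are uniformly bounded on the compact set $K$. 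Closedness of $V^{-1}(K)$ is immediate from continuity of $V$. Hence $V^{-1}(K)$ is compact in $\bb C^n$, and its image under $\pi$ is compact in $\bb C^n/S_n$, establishing properness.

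Finally, I would verify that $\bb C^n/S_n$ is Hausdorff (and locally compact), which is needed for the properness argument to conclude that $\tilde V$ is a homeomorphism. Hausdorffness follows because the $S_n$-action on $\bb C^n$ is by homeomorphisms of a finite group, so the orbit equivalence relation is closed in $\bb C^n \times \bb C^n$; alternatively, the existence of a continuous injection $\tilde V$ into the Hausdorff space $\bb C^n$ separates orbits directly. Local compactness then follows from local compactness of $\bb C^n$ and the fact that $\pi$ is an open quotient map (because $S_n$ acts by homeomorphisms). Putting the pieces together, $\tilde V$ is a continuous proper bijection between locally compact Hausdorff spaces, hence a homeomorphism, and $\text{Sp} = \tilde V^{-1} \circ \chi$ is continuous.

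The main obstacle is the continuity of $\tilde V^{-1}$, that is, the classical fact that roots of a monic polynomial depend continuously on its coefficients as an unordered multiset. The most delicate point in carrying this out is not the algebra but the topology: one must be careful to use the quotient topology on $\bb C^n/S_n$ correctly, and the properness argument is what makes the proof clean and coordinate-free (as opposed to more painful arguments via Rouché's theorem that track individual roots through small perturbations).
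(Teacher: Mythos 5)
The paper does not actually prove this statement; it is imported verbatim as Theorem 5.2 of the cited reference (Serre, \emph{Matrices: Theory and Applications}), so there is no in-paper argument to compare against. Your proof is correct and is essentially the standard one---indeed essentially the argument in that reference: factor $\text{Sp}$ through the (polynomial, hence continuous) coefficient map and the induced Vi\`ete map $\tilde V:\bb C^n/S_n\to\bb C^n$, then show $\tilde V$ is a continuous proper bijection into a locally compact Hausdorff space, hence closed, hence a homeomorphism, with properness supplied by the Cauchy root bound. The topological points you flag (Hausdorffness of the quotient via the continuous injection $\tilde V$, openness of $\pi$ for a finite group acting by homeomorphisms, proper implies closed over a locally compact Hausdorff target) are all handled correctly, so the argument is complete.
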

A matrix $M \in \bb M_n(\bb R)$ is Hurwitz stable if $\max {\bf Re} [\text{Sp}(M)] < 0$ and $M$ is Schur stable if $\rho(M) < 1$, where $\rho(M)$ denotes the spectral radius of $M$.\footnote{$\max$ and $\bf Re$ denote the usual set functions.}
We denote the open unit disk of $\bb C$ by $\bb D = \{z \in \bb C: |z| < 1\}$ and the left half plane by $\bb H_{-} = \{z \in \bb C: \text{\bf Re}(z) < 0\}$; $\bb H^{n}_{-}$ will be the $n$-dimensional version of $\bb H_{-}$. The notation $|z|$ denotes the modulus of the complex number $z \in \bb C$ and $\bar{z}$ denotes the complex conjugate of $z \in \bb C$. We use $\bb C[z]$ and $\bb R[z]$ to denote polynomials with complex coefficients and real coefficients, respectively, where $z$ is the corresponding indeterminant. For a polynomial $p(z) \in \bb C[z]$ or $p(z) \in \bb R[z]$, we use $p'(z)$ to denote its derivative with respect to $z$.\par
We consider a continuous LTI MIMO system, 
\begin{align}
  \label{eq:continuous_system}
  \dot{x}(t) &= A x(t) + B u(t), \\
  y(t) &= C u(t),
\end{align}
and discrete LTI MIMO system, 
\begin{align}
  \dot{x}(k) &= A x(k) + B u(k), \\
  y(k) &= C u(k),
\end{align}
where $A \in \bb M_n(\bb R)$, $B\in \bb M_{n \times p}(\bb R)$ and $C \in \bb M_{p \times n}(\bb R)$. We will abbreviate a system by the tripe $(A, B, C)$. We say that a system is controllable and observable if it satisfies the Kalman Rank Condition, namely, $\text{rank}([B, AB, \dots, A^{n-1}B]) = n$ and $\text{rank}([C^\top, C^\top A, \dots, C^\top A^{n-1}]^\top) = n$, respectively~\cite{zabczyk2009mathematical}. For the problem of designing static feedback gains, we are interested in the feedback gain $K \in \bb M_{m \times n}(\bb R)$ with $u(t) = K x(t)$; in terms of designing static output feedback controller, we are interested in the feedback gain $K \in\bb  M_{m \times p}(\bb R)$ with $u(t) = K y(t) = K C x(t)$. For a controllable and observable triplet $(A, B, C)$, we denote the set of \emph{Hurwitz stabilizing output-feedback gains} by
\begin{align}
  \label{eq:continuous_H}
  \ca H &= \{K \in \bb M_{m \times p}(\bb R): A-BKC \text{ is Hurwitz stable}\},
          \end{align}
and the set of \emph{Schur stabilizing output-feedback gains} by
\begin{align}
  \label{eq:discrete_S}
  \ca S &=\{K \in \bb M_{m \times p}(\bb R): A-BKC \text{ is Schur stable}\}.
\end{align}

When we are concerned with a state-feedback system with same parameters $(A, B)$, $\ca H_s$ and $\ca S_s$ are defined as,
\begin{align}
  \label{eq:continuous_H_s}
  \ca H_s &= \{K \in \bb M_{m \times n}(\bb R): A-BK \text{ is Hurwitz stable}\},\\
  \label{eq:discrete_S_s}
  \ca S_s &= \{K \in \bb M_{m \times n}(\bb R): A-BK \text{ is Schur stable}\},
\end{align}
where we have used the subscript $s$ to denote the state-feedback controller.
\begin{remark}
  We note that if $C \in GL_n(\bb R)$, then $\ca H = C^{-1} \ca H_s$, namely the set is precisely $\ca H_s$ under a change of coordinates. All topological and metrical properties will be identical to $\ca H_s$ in the case $C \in GL_n(\bb R)$. Hence, it is natural to categorize $(A, B, C)$ with $C \in GL_n(\bb R)$ as ``state-feedback'' systems.
\end{remark}

In the analysis of SISO systems~\cite{bu2019siso}, the canonical controller form is proven to be useful in simplifying proofs in many cases. For MIMO systems, we shall employ the Brunovsky controller form~\cite{brunovsky_K_1970}. Any controllable system pair $(A, B)$ can be transformed into the Brunovsky form through a change of basis and translation. So the set of Hurwitz (Schur) stabilizing controllers for a Brunovsky pair amounts to a change of basis of the original set of Hurwitz stabilizing set, as we shall show next.
\par
For a controllable system pair $(A, B)$, if $\text{rank}(B) = r \le m$, there exists $T \in GL_n(\bb R)$, $V \in GL_m(\bb R)$ and $F \in \bb M_{m \times n} (\bb R)$ such that,
\begin{align*}
  (A^{\flat}, B^{\flat}) = (T(A+BF)T^{-1}, TB),
\end{align*}
where $A^{\flat}$ and $B^{\flat}$, respectively, admit the form,
  \begin{align*}
    A^{\flat} = \begin{pmatrix}
      A_{k_1} & 0 & \dots & 0 \\
      0 & A_{k_2} & \dots & 0 \\
      \vdots & \vdots & \ddots & \vdots \\
      0 & 0 & \dots & A_{k_r} ,
    \end{pmatrix},
  \end{align*}
  where each $A_{k_j} \in \bb M_{k_j \times k_j} (\bb R)$ is of the form,
  \begin{align*}
    \begin{pmatrix}
      0 & 1 & 0 &\cdots & 0 \\
      0 & 0 & 1 & \cdots & 0 \\
      \vdots & \vdots & \ddots & \ddots & \vdots \\
      \cdots & \cdots & \cdots & 0 & 1 \\
      0 & 0 & 0 & \cdots & 0
    \end{pmatrix},
    \end{align*}
    such that $k_1 \ge k_2 \ge \dots \ge k_r$ with $k_1 + k_2 + \dots + k_r = n$ and $B^{\flat}$ assumes the structure,
    \begin{figure}[H]
      \center
    \begin{tikzpicture}[mymatrixenv]
    \matrix[mymatrix] (m)  {
       0 & 0 & \cdots & 0 & 0 & \cdots & 0 \\
 \vdots & \vdots & \ddots & \vdots & \vdots & \ddots & 0 \\
 1 & 0 & \cdots & 0 & 0 & \cdots & 0\\
       0 & 0 & \cdots & 0 & 0 & \cdots & 0 \\
 \vdots & \vdots & \ddots & \vdots & \vdots & \ddots & 0 \\
 0 & 1 & \cdots & 0 & 0 & \cdots & 0\\
 \vdots & \vdots & \vdots & \vdots & \vdots & \vdots & \vdots \\
       0 & 0 & \cdots & 0 & 0 & \cdots & 0 \\
 \vdots & \vdots & \ddots & \vdots & \vdots & \ddots & 0 \\
 0 & 0 & \cdots & 1 & 0 & \cdots & 0\\
    };
    \mymatrixbracetop{1}{4}{$r$ columns}
    \mymatrixbracetop{5}{7}{$m-r$ columns}
    \mymatrixbraceright{1}{3}{$k_1$ rows}
    \mymatrixbraceright{4}{6}{$k_2$ rows}
    \mymatrixbraceright{8}{10}{$k_r$ rows} 
\end{tikzpicture}
\end{figure}
If the set of Hurwitz stabilizing state-feedback gains for $(A^{\flat}, B^{\flat})$ is denoted by $\ca H_s^{\flat}$ (respectively, the set of Schur stabilizing state-feedback gains is denoted by $\ca S_s^{\flat}$), we observe that $\ca H_s^{\flat}$ amounts to a change of coordinates and translation of $\ca H_s$.
 \begin{observation}
   \label{obs:feedback_equiv}
  Suppose that $(A, B)$ is controllable and $(A^{\flat}, B^{\flat})$ is the corresponding Brunovsky form.
 Then,
 \begin{align*}
   \ca S_s &= \{V \hat{K} T^{-1}-F: \hat{K} \in \ca S_s^{\flat}\}, \\
   \ca H_s &= \{V \hat{K} T^{-1}-F: \hat{K} \in \ca H_s^{\flat}\}.
 \end{align*}
 \end{observation}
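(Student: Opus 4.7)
The plan is to exhibit the affine map $\Phi: \hat K \mapsto V \hat K T^{-1} - F$ as a bijection of $\bb M_{m \times n}(\bb R)$ onto itself which sends $\ca H_s^{\flat}$ onto $\ca H_s$ and $\ca S_s^{\flat}$ onto $\ca S_s$. Since $T \in GL_n(\bb R)$ and $V \in GL_m(\bb R)$, $\Phi$ is plainly invertible with inverse $K \mapsto V^{-1}(K + F) T$, so the substantive content of the observation is that $\Phi$ preserves the relevant stabilization property; once this is established, the two set equalities follow immediately from bijectivity.

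The central step is to verify, for $K = \Phi(\hat K) = V \hat K T^{-1} - F$, the similarity
\[
A^{\flat} - B^{\flat} \hat K \;=\; T(A - BK) T^{-1}.
\]
I would carry this out by direct substitution, expanding the right-hand side with $A^{\flat} = T(A + BF) T^{-1}$ and the defining relation between $B^{\flat}$, $B$, $T$, and $V$ supplied by the Brunovsky construction, and then collecting terms. The preliminary feedback $F$ cancels against the $-F$ in the definition of $\Phi$, while the outer $V$ and $T^{-1}$ factors in $\Phi$ are absorbed by the corresponding factors in $B^{\flat}$ and in the conjugation by $T$, leaving exactly the claimed identity.

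Once this similarity is in hand, the conclusion is immediate: both Hurwitz and Schur stability are spectral properties, and similar matrices have identical spectra, so $A - BK$ is Hurwitz (resp.\ Schur) stable if and only if $A^{\flat} - B^{\flat} \hat K$ is. Equivalently, $K \in \ca H_s \Leftrightarrow \hat K \in \ca H_s^{\flat}$ and $K \in \ca S_s \Leftrightarrow \hat K \in \ca S_s^{\flat}$. Combining these equivalences with the bijectivity of $\Phi$ yields $\ca H_s = \Phi(\ca H_s^{\flat})$ and $\ca S_s = \Phi(\ca S_s^{\flat})$, which is precisely the statement of the observation. There is no conceptual obstacle here; the only care required is in tracking the placements of $T$, $V$, and $F$ in the similarity calculation and in confirming that the Brunovsky reduction is indeed parameterized exactly as the observation presumes.
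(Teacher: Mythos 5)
Your proposal is correct and follows essentially the same route as the paper: substitute $A^{\flat}=T(A+BF)T^{-1}$ and $B^{\flat}=TBV$ into $A^{\flat}-B^{\flat}\hat K$, factor out the similarity $T(\cdot)T^{-1}$, and invoke invariance of the spectrum (hence of Hurwitz/Schur stability) under similarity; your only addition is making explicit that the affine map $\hat K\mapsto V\hat K T^{-1}-F$ is a bijection, which the paper leaves implicit when it passes from the one-directional implication to the set equality. One small caution, shared with the paper's own computation: carrying out the factorization literally gives $A^{\flat}-B^{\flat}\hat K=T\bigl(A-B(V\hat K T-F)\bigr)T^{-1}$, so the placement of $T$ versus $T^{-1}$ in the asserted identity deserves a second look when you actually "collect terms."
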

 \begin{proof}
   We prove the relation between $\ca S^{\flat}$ and $\ca S$; the proof for ${\ca H^{\flat}}$ and $\ca H$ can proceed similarly. 
   
 By definition of the feedback equivalence, there is $T \in GL_n(\bb R)$, $V \in GL_m(\bb R)$ and $F \in \bb M_{m \times n} (\bb R)$ such that,
 \begin{align*}
   {A^{\flat}} = T (A+BF) T^{-1}, \quad {B^{\flat}} = TBV.
 \end{align*}
 If $\hat{K} \in {S}$, i.e., $\rho({A}^{\flat}-{B}^{\flat}\hat{K}) < 1$, it follows that,
 \begin{align*}
   \rho({A^{\flat}}-{B^{\flat}}\hat{K}) &= \rho(T(A+BF - BV{K}T)T^{-1}) \\
   &= \rho(A-B(V{K}T^{-1} - F)) < 1.
 \end{align*}
This implies that $\ca S = V \ca S^{\flat} T^{-1} - F \coloneqq \{V {K} T^{-1}-F: {K} \in \ca S^{\flat}\}$. 
 \end{proof}
 \begin{remark}
   Since change of coordinates and translation are diffeomorphic in $\bb M_{m \times n}(\bb R)$, the above observation suggests that the topological properties of stabilizing feedback gains are identical between $(A, B)$ and $(A^{\flat}, B^{\flat})$. This observation has many consequences; for example, when $(A, B)$ is in the Brunovsky form, $A-BK$ has special structure and this structure will be useful in proving that $\ca S_s$ is contractible and regular open (see Lemmas~\ref{lemma:regular_open_S},~\ref{lemma:contractible_S}, and~\ref{lemma:connected_S}).
 \end{remark}
We now observe that the topological properties of the sets of Hurwitz and Schur stabilizing feedback gains are independent of the column rank of $B$.
\begin{observation}
  \label{obs:full_column_rank}
  Suppose that $(A,B)$ is a controllable pair and $(A^{\flat}, B^{\flat})$ is the corresponding Brunovsky form. If $\text{rank}(B)= r < m$, we define $\hat{B}^{\flat} \in \bb M_{n \times r}( \bb R)$ by,
    \begin{figure}[H]
      \center
    \begin{tikzpicture}[mymatrixenv]
    \matrix[mymatrix] (m)  {
       0 & 0 & \cdots & 0  \\
 \vdots & \vdots & \ddots & \vdots  \\
 1 & 0 & \cdots & 0 \\
       0 & 0 & \cdots & 0  \\
 \vdots & \vdots & \ddots & \vdots  \\
 0 & 1 & \cdots & 0 \\
 \vdots & \vdots & \ddots & \vdots  \\
       0 & 0 & \cdots & 0  \\
 \vdots & \vdots & \ddots & \vdots  \\
 0 & 0 & \cdots & 1 \\
    };
    \mymatrixbracetop{1}{4}{$r$ columns}
    \mymatrixbraceright{1}{3}{$k_1$ rows}
    \mymatrixbraceright{4}{6}{$k_2$ rows}
    \mymatrixbraceright{8}{10}{$k_r$ rows}
\end{tikzpicture}
\end{figure}
and let $\hat{\ca S}^{\flat}$ denotes the set of Schur stabilizing controllers for $({A}^{\flat}, \hat{B}^{\flat})$. Then $\ca S$ is diffeomorphic to $\hat{\ca S}^{\flat} \times \underbrace{\bb R^n  \times \dots \times \bb R^n}_{m-r}$.
\end{observation}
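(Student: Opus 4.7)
The plan is to peel off, one after the other, two simple structural reductions. First, by Observation~\ref{obs:feedback_equiv}, the affine map $\hat K \mapsto V\hat K T^{-1} - F$ is a global diffeomorphism of $\bb M_{m \times n}(\bb R)$ that sends $\ca S^{\flat}_s$ onto $\ca S$ (the state-feedback interpretation of $\ca S$ is the one consistent with the setting of this observation). Hence it suffices to exhibit a diffeomorphism
\begin{equation*}
\ca S^{\flat}_s \;\cong\; \hat{\ca S}^{\flat} \times \underbrace{\bb R^n \times \dots \times \bb R^n}_{m-r}.
\end{equation*}

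For the second reduction I would exploit the zero columns of $B^{\flat}$. The Brunovsky form exhibited above has its last $m-r$ columns equal to zero, and its first $r$ columns are, by construction, precisely the columns of $\hat B^{\flat}$. Partition any $K \in \bb M_{m \times n}(\bb R)$ by rows as $K = \begin{pmatrix}\hat K \\ K_r\end{pmatrix}$ with $\hat K \in \bb M_{r\times n}(\bb R)$ and $K_r \in \bb M_{(m-r)\times n}(\bb R)$. Block multiplication gives $B^{\flat} K = \hat B^{\flat}\hat K$ regardless of $K_r$, so
\begin{equation*}
A^{\flat} - B^{\flat} K \;=\; A^{\flat} - \hat B^{\flat}\hat K,
\end{equation*}
and this matrix is Schur stable if and only if $\hat K \in \hat{\ca S}^{\flat}$, with no constraint imposed on $K_r$.

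The row-splitting map $\Phi : K \mapsto (\hat K, K_r)$ is a linear isomorphism from $\bb M_{m\times n}(\bb R)$ onto $\bb M_{r\times n}(\bb R) \oplus \bb M_{(m-r)\times n}(\bb R)$, hence a diffeomorphism. By the previous paragraph it restricts to a diffeomorphism $\ca S^{\flat}_s \to \hat{\ca S}^{\flat} \times \bb M_{(m-r)\times n}(\bb R)$. Identifying $\bb M_{(m-r)\times n}(\bb R)$ with $\bb R^n \times \dots \times \bb R^n$ ($m-r$ copies, one per row) and composing with the affine diffeomorphism supplied by Observation~\ref{obs:feedback_equiv} yields the desired diffeomorphism onto $\ca S$.

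There is no genuine obstacle here; the content is linear algebra plus an appeal to the previous observation. The only place requiring care is the row/column bookkeeping: since the trivial columns of $B^{\flat}$ are the \emph{last} $m - r$ ones, it is the \emph{last} $m - r$ \emph{rows} of $K$ that decouple from the closed-loop dynamics, and these are exactly the coordinates that parameterize the $(m-r)n$-dimensional Euclidean factor.
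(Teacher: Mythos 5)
Your proposal is correct and follows the same route as the paper: the paper's proof is the one-line remark that $\ca S^{\flat}$ equals $\hat{\ca S}^{\flat} \times \bb R^n \times \dots \times \bb R^n$, which is exactly the row-splitting/zero-column argument you spell out, combined with the affine diffeomorphism from Observation~\ref{obs:feedback_equiv}. You have simply made explicit the two steps the paper leaves implicit.
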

\begin{proof}
  It suffices to observe that $\ca S^{\flat}$ is exactly $\hat{\ca S}^{\flat} \times \underbrace{\bb R^n  \times \dots \times \bb R^n}_{m-r}$.
\end{proof}
 According to Observation~\ref{obs:full_column_rank}, without loss of generality, we may consider a full column rank matrix $B$ in studying the topological properties of the sets of Hurwitz and Schur stabilizing feedback gains.
 \subsection{On structured feedback gains}
 \label{sec:structured}
 In decentralized control systems, structured feedback gains, reflecting the underlying interaction network are of particular interest. For example, if the underlying interaction network is modeled by a communication graph $\ca G = (V, E)$ and only a subset of agents are accessible to be controlled upon and the control law must only utilize the information of an agent and its neighbors, the feedback gains must have a zero pattern that is compatible with this communication graph, i.e., $K_{ij} = 0$ if $(i,j) \notin E(\ca G)$. More generally, if $\ca U \subseteq \bb M_{m \times p}(\bb R)$ is a linear subspace, the sets of structured output-feedback stabilizing feedback gains is given by
 \begin{align*}
   \ca K_{\ca H} &= \{K \in \ca U: A-BKC \text{ is Hurwitz stable}\}, \\
   \ca K_{\ca S} &= \{K \in \ca U: A-BKC \text{ is Schur stable}\}.
 \end{align*}
 Accordingly, the set of structured state-feedback stabilizing feedback gains are given by,
\begin{align*}
   \ca K_{\ca H_s} &= \{K \in \ca U_s: A-BK \text{ is Hurwitz stable}\}, \\
   \ca K_{\ca S_s} &= \{K \in \ca U_s: A-BK \text{ is Schur stable}\},
\end{align*}
where $\ca U_s$ is a linear subspace of $\bb M_{n \times m}(\bb R)$. \par
We shall {emphasize} that in this scenario, the interaction network is modeled by a graph $\ca G=(V, E)$. Each agent can only have direct control over its own dynamics, using information from own sensors and from communicating with neighboring agents, i.e., $B$ has a diagonal structure.
If all agents have control over their own dynamics, without loss of generality, we can assume that $B=I$. In the case that only a subset of agents has direct control over their respective dynamics, without loss of generality, by permuting the agents, we can assume that $B$ has the form, 
\begin{align*}
B=
  \begin{pmatrix}
    I_{m \times m} \\
    \bf 0_{(n-m) \times m}
  \end{pmatrix}.
 \end{align*}
 
 Depending on the structure of the linear subspace $\ca U$, the sets of state-feedback stabilizing gains, $\ca K_{\ca H_s}$ and $\ca K_{\ca S_s}$ will no longer be path-connected. It has been reported in~\cite{feng2018on} that $\ca K_{\ca H}$ can have exponentially many connected components  (in the dimension of the underlying state). In this direction, a sufficient condition ($B=I$ and $C \in GL_n(\bb R)$) has been proposed in~\cite{feng2018on} in order to guarantee that $\ca K_{\ca H_s}$ is connected. We shall review this result and provide another construction to show the exponential dependence of number of connected components on the state dimension. Moreover, we provide results pertaining to the properties of $\ca K_{\ca S}$.
\section{Properties of Hurwitz Stable Feedback Controllers $\ca H$}
\label{sec:hurwitz}
In this section, we shall observe some of the properties of the sets $\ca H$ and $\ca H_s$:
\begin{enumerate}
\item $\ca H$ and $\ca H_s$ are both open in the Euclidean topology.
  \item $\ca H_s$ is contractible\footnote{Note that every contractible set is simply connected and path-connected.} while $\ca H$ is not connected in general.
  \item $\ca H_s$ is unbounded while $\ca H$ could be either bounded or unbounded. We also observe sufficient conditions under which $\ca H$ is unbounded.
\end{enumerate}
\begin{lemma}
  \label{lemma:open_H}
  $\ca H$ is open in $\bb M_{m \times p}( \bb R)$ and $\ca H_s$ is open in $\bb M_{m \times n}(\bb R)$.
\end{lemma}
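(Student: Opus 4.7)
The plan is to obtain both statements as immediate consequences of Theorem~\ref{thrm:continuous_spectrum} (continuity of the spectrum map) combined with the fact that the parameterizations $K \mapsto A-BKC$ and $K \mapsto A-BK$ are linear in $K$.

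First, I would show that the set of Hurwitz matrices
\[
\ca H_n := \{M \in \bb M_n(\bb R) : \max \Re[\text{Sp}(M)] < 0\}
\]
is open in $\bb M_n(\bb R)$. To do this, I would define the function $\Phi: \bb M_n(\bb R) \to \bb R$ by $\Phi(M) = \max \Re[\text{Sp}(M)]$. The componentwise real-part map $\bb C^n \to \bb R^n$ and the maximum map $\bb R^n \to \bb R$ are both continuous and invariant under the $S_n$-action, hence descend to continuous maps on the quotient $\bb C^n/S_n$ (this is the universal property of the quotient topology). Composing with the continuous spectrum map from Theorem~\ref{thrm:continuous_spectrum} yields continuity of $\Phi$. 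Then $\ca H_n = \Phi^{-1}((-\infty, 0))$ is open as the preimage of an open set.

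Next, the affine map $L_{B,C}: \bb M_{m\times p}(\bb R) \to \bb M_n(\bb R)$, $K \mapsto A - BKC$, is linear in $K$ (up to the constant term $A$) and hence continuous. By definition, $\ca H = L_{B,C}^{-1}(\ca H_n)$, so $\ca H$ is open in $\bb M_{m \times p}(\bb R)$. The identical argument applied to the affine map $K \mapsto A - BK$ on $\bb M_{m \times n}(\bb R)$ shows that $\ca H_s$ is open as well.

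There is essentially no obstacle here; the only mild point to be careful about is that $\text{Sp}(M)$ lives in the quotient space $\bb C^n/S_n$, so one must verify that $\Phi$ is well-defined and continuous as a function \emph{through} this quotient. This is handled in one line by observing that $\max \Re$ is a symmetric function of its $n$ arguments and therefore factors through the canonical projection $\pi: \bb C^n \to \bb C^n/S_n$.
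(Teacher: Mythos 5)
Your proof is correct and follows essentially the same route as the paper: continuity of the spectrum map (Theorem~\ref{thrm:continuous_spectrum}), descent of $\max\operatorname{\bf Re}$ through the quotient $\bb C^n/S_n$, and openness as the preimage of $(-\infty,0)$ under a composition of continuous maps. Factoring the argument through the open set of Hurwitz matrices in $\bb M_n(\bb R)$ rather than composing everything into a single map is only a cosmetic reorganization.
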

\begin{proof}
  We note that the map $\tilde{\upsilon}: \bb C^n/\bb S_n \to \mathbb R$ given by $v \mapsto \max_i \text{\bf Re}(v_i)$ is continuous where $v_i$ denotes the $i$th component of $v$. It is clear that the map $\upsilon: \bb C^n \to \bb R$ given by $v \mapsto \max_i \text{\bf Re}(v_i)$ is continuous. Based on the properties of quotient topology (Theorem $3.73$ in~\cite{lee2010introduction}), $\tilde{\upsilon}$ is the unique continuous map such that $\upsilon = \tilde{\upsilon} \circ \pi$:
\begin{figure}[H]
        \centering
        \begin{tikzcd}
          \bb C^n \arrow[d, "\pi"] \arrow[rd, "\upsilon"] & \\
          \bb C^n_{\text{*}}/S_n \arrow[r, "\tilde{\upsilon}"] & \bb R;
        \end{tikzcd}
      \end{figure}
  We observe that the map $f: \bb M_{m \times p} (\bb R) \to \bb R$ defined by,
  \begin{align*}
    &K \mapsto A-BKC \mapsto \text{Sp}(A-BKC) \mapsto \text{\bf Re}(\text{Sp}(A-BKC)) \\
    &\mapsto \max\left(\text{\bf Re}(\text{Sp}(A-BKC))\right),
  \end{align*}
  is continuous as a composition of continuous maps. Since $\mathcal H = f^{-1}((-\infty, 0))$, $\mathcal H$ is open.
 
For $\ca H_s$, we only need to observe that the function $f_s: \bb M_{m \times n}(\bb R) \to \bb R$ given by, \begin{align*}
    &K \mapsto A-BK \mapsto \text{Sp}(A-BK) \mapsto \text{\bf Re}(\text{Sp}(A-BK)) \\
    &\mapsto \max\left(\text{\bf Re}(\text{Sp}(A-BK))\right),
  \end{align*}
is continuous.
\end{proof}
Next, we prove that the set $\ca H_s$ is contractible. We first observe that the linear matrix inequality (LMI) parametrization of static feedback gains~\cite{dullerud2013course} is a diffeomorphism between $\ca H_s$ and a convex set.
In this direction, suppose that $Q \succ 0$ is a positive definite matrix. By Lyapunov matrix theory~\cite{dullerud2013course}, $K \in \ca H_s$ if and only if there exists $P \succ 0$ such that,
\begin{align}
\label{eq:continuous_lyapunov}
(A-BK)P + P(A-BK)^\top + Q =0.
\end{align}
Consider the change of variable $Y= KP$ that yields,
\begin{align}
  \label{eq:myeq1_h}
  AP + PA^\top - BY - Y^\top B^\top + Q = 0.\end{align}
We denote by $\ca L$ as the solution set $(P, Y)$ of~\eqref{eq:myeq1_h}, i.e.,
\begin{align*}
  \ca L = \{(P, Y): P \succ 0, AP + PA^\top - BY - Y^{\top} B^{\top} + Q = 0\}.
\end{align*}
\begin{lemma}
  \label{lemma:diffeo_H}
  The map $\varphi_Q: \ca L \to \ca H_s$ is a diffeomorphism.  
\end{lemma}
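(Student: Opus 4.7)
\medskip

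\textbf{Proof plan.} The definition of $\varphi_Q$ is implicit in the text: since the change of variable was $Y=KP$, the map is $\varphi_Q(P,Y) = Y P^{-1}$, which is well-defined on $\ca L$ because $P\succ 0$ implies $P$ is invertible. My plan is to exhibit a smooth two-sided inverse and verify smoothness of both maps. The candidate inverse $\psi_Q : \ca H_s \to \ca L$ sends $K$ to $(P(K),\,K P(K))$, where $P(K)$ is the unique positive-definite solution of the Lyapunov equation
\begin{equation*}
(A-BK) P + P (A-BK)^\top + Q = 0 .
\end{equation*}
This solution exists, is unique, and is positive definite for every $K\in\ca H_s$ precisely by the Lyapunov theorem cited above (this is the only place stability is used).

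The first step is to check the maps land where claimed and are mutually inverse. If $(P,Y)\in\ca L$, substituting $Y=KP$ with $K=YP^{-1}$ into~\eqref{eq:myeq1_h} and right-multiplying by $P^{-1}$ recovers~\eqref{eq:continuous_lyapunov} with $Q$ replaced by $P^{-1} Q P^{-1}\cdot P = Q$ after symmetrization; more cleanly, post- and pre-multiply the rewritten identity appropriately to obtain $(A-BK)P+P(A-BK)^\top+Q=0$, and since $P\succ 0$, Lyapunov theory forces $A-BK$ to be Hurwitz, so $K\in\ca H_s$. Conversely, for $K\in\ca H_s$, uniqueness of the Lyapunov solution together with the construction $Y=KP(K)$ returns a point of $\ca L$. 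The identities $\varphi_Q\circ\psi_Q = \Id_{\ca H_s}$ and $\psi_Q\circ\varphi_Q = \Id_{\ca L}$ are then direct from $Y=KP$.

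The second step is smoothness. The map $\varphi_Q$ is the restriction of $(P,Y)\mapsto YP^{-1}$ to the open subset of $\symm_n \times \bb M_{m\times n}(\bb R)$ where $P$ is invertible, and matrix inversion is smooth there; so $\varphi_Q$ is smooth. For $\psi_Q$, vectorizing the Lyapunov equation gives
\begin{equation*}
\bigl[I\otimes (A-BK) + (A-BK)\otimes I\bigr]\,\vect(P) = -\vect(Q),
\end{equation*}
and the bracketed matrix is invertible exactly when $A-BK$ shares no eigenvalue with the negative of another of its eigenvalues, which holds on $\ca H_s$ since all eigenvalues lie in $\bb H_{-}$. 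Hence $P(K)$ is a rational function of the entries of $K$ on $\ca H_s$, so smooth; consequently $\psi_Q(K)=(P(K), K P(K))$ is smooth.

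The only subtle point, and the one I would spend the most care on, is the manifold structure of $\ca L$: one must verify that $\ca L$ is a smooth submanifold of $\symm_n\times\bb M_{m\times n}(\bb R)$ so that ``diffeomorphism'' is meaningful. Since $\ca L$ is the intersection of the open set $\{P\succ 0\}$ with the affine subspace cut out by the linear equation $AP+PA^\top - BY - Y^\top B^\top + Q = 0$, it is automatically a smooth (embedded) submanifold, and the bijection $\psi_Q$ constructed above, being smooth with smooth inverse $\varphi_Q$, completes the proof.
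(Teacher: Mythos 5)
Your proof is correct and follows essentially the same route as the paper's: bijectivity via uniqueness of the positive-definite Lyapunov solution for a Hurwitz $A-BK$, and smoothness of $(P,Y)\mapsto YP^{-1}$ and $K\mapsto (P(K),KP(K))$. You supply details the paper leaves implicit (the Kronecker-product argument for smoothness of $K\mapsto P(K)$ and the submanifold structure of $\ca L$), which strengthens rather than changes the argument.
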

\begin{proof}
  By definition, $\varphi_Q$ is surjective. In order to prove that it is bijective, it suffices to show that for every $K \in \ca H_s$, there is a unique pair $(P, Y) \in \ca L$ such that $\varphi_Q((P, Y)) = K$. For every $(P, Y) \in \varphi^{-1}_{Q}(K)$, we must have, 
\begin{align*} AP + PA^\top -BY-Y^\top B^\top + Q =0
\quad \lra \quad AP + PA^\top -BKP-PK^\top B^\top + Q = 0. \end{align*} Note that the solution $P$ of \eqref{eq:continuous_lyapunov} is unique if $A-BK$ is stable. Hence, $\varphi_Q$ is bijective. The map $(P, Y) \to YP^{-1}$ and $K \mapsto (P(K), KP)$ are both $C^{\omega}$ (real analytic). Thereby, $\varphi_Q$ is a diffeomorphism. 
\end{proof}
As diffeomorphism preserves topological properties, we immediately conclude the following.
\begin{lemma}
  \label{lemma:contractible_H}
  The set $\ca H_s$ is contractible.
\end{lemma}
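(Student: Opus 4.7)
The plan is to leverage Lemma~\ref{lemma:diffeo_H} and reduce contractibility of $\ca H_s$ to contractibility of $\ca L$. Since contractibility is preserved under homeomorphism (and in particular under the diffeomorphism $\varphi_Q$), it suffices to prove that $\ca L$ is contractible.

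For this, I would observe that $\ca L$ is a convex subset of $\symm_n(\bb R) \times \bb M_{m \times n}(\bb R)$. Indeed, the defining equation
\begin{align*}
AP + PA^\top - BY - Y^\top B^\top + Q = 0
\end{align*}
is affine in the pair $(P, Y)$, so its solution set is an affine subspace. The positivity constraint $P \succ 0$ carves out an open convex cone in the $P$ coordinate. The intersection of an affine subspace with an open convex set is convex, hence $\ca L$ is convex. Every nonempty convex subset of a Euclidean space is star-shaped, and thus contractible via the straight-line homotopy $H_t((P,Y),(P_0,Y_0)) = (1-t)(P,Y) + t(P_0,Y_0)$ to any chosen basepoint $(P_0, Y_0) \in \ca L$.

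It remains only to check that $\ca L$ is nonempty, which follows from Kalman's stabilization theorem: since $(A,B)$ is controllable, there exists some $K \in \ca H_s$, and then $K$ together with the unique positive definite solution $P$ of the Lyapunov equation~\eqref{eq:continuous_lyapunov} yields a pair $(P, KP) \in \ca L$. The resulting contraction of $\ca L$ transports via $\varphi_Q$ to a contraction of $\ca H_s$, completing the argument.

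The main conceptual content has already been absorbed into Lemma~\ref{lemma:diffeo_H}; the only potentially subtle point in the present lemma is checking that the change of variables $(P, Y)$ does indeed convexify the feasibility condition, but this is immediate once one writes out the equation. So I expect no real obstacle beyond a one- or two-line verification.
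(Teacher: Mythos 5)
Your proposal follows exactly the paper's argument: it invokes the diffeomorphism $\varphi_Q$ from Lemma~\ref{lemma:diffeo_H} and reduces contractibility of $\ca H_s$ to convexity (hence contractibility) of $\ca L$, which you justify correctly as the intersection of an affine subspace with the open convex cone $\{P \succ 0\}$. The paper states this in one line; your version merely spells out the convexity and nonemptiness checks, so the two proofs are the same in substance.
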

\begin{proof}
  It suffices to observe that $\ca L$ is a convex set and thus contractible; hence, $\ca H_s$ is contractible.
\end{proof}
\begin{remark}
  In~\cite{ohara1993geometric}, another diffeomorphism was proposed under the assumption that $B$ has full column rank.\footnote{It is possible to conclude the set is contractible if combining with Observation~\ref{obs:full_column_rank}.} In~\cite{feng2018on}, the same LMI formulation of $\ca H_s$ was employed to show that the set $\ca H_s$ is path-connected, where it was observed that $\ca H_s$ is the continuous image of $\ca L$ under $\varphi_Q$.
\end{remark}

For output-feedback system $(A, B, C)$, the set $\ca H$ will be no longer connected and this is not even true for SISO systems~\cite{bu2019siso}. \newline 

In the SISO case~\cite{bu2019siso}, the set $\ca H_s$ for SISO systems is regular open and the boundary can be characterized. However for MIMO systems, it becomes rather intricate to determine whether the boundary of $\ca H_s$ is exactly $$\ca B_s=\{K \in \bb M_{m \times n}: \max \text{\bf Re} (\text{Sp}(A-BK)) = 0 \}.$$ We observe a sufficient condition under which $M \in \ca B_s$ is on the boundary $\partial \ca H_s$. Define $g: \bb M_{m \times n}(\bb R) \to \bb R^n$ by $K \mapsto \chi_{A-BK} \cong \bb R^{n}$, which maps $K \in \bb M_{m \times n}(\bb R)$ to the coefficients of the characteristic polynomial of $A-BK$.\footnote{$\chi_{A-BK} \cong \bb R^n$ means that we are identifying the characteristic polynomial with $\bb R^n$ by the natural bijection between a monic $n$th degree polynomial and its coefficients.}
\begin{proposition} \label{prop:suff_cond_1}
  For $M \in \ca B_s$, if $\rank(Dg(M)) = n$ then $M \in \ca H_s$.
\end{proposition}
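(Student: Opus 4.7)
The statement as literally written is inconsistent: $\ca B_s$ and $\ca H_s$ are disjoint by definition, since $K \in \ca B_s$ forces $\max \Re(\text{Sp}(A-BK)) = 0$, not strictly negative. Reading the proposition together with the sentence that introduces it (``a sufficient condition under which $M \in \ca B_s$ is on the boundary $\partial \ca H_s$''), the intended conclusion must be $M \in \partial \ca H_s$, and my plan addresses that corrected statement.

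The core idea is to combine a local open-mapping consequence of the rank hypothesis with a root-perturbation observation in coefficient space. First, since $\rank Dg(M) = n$ means that the differential $Dg(M) \colon \bb M_{m \times n}(\bb R) \to \bb R^n$ is surjective, $g$ is a submersion at $M$; the constant-rank theorem (or the local normal form of submersions) then gives that $g$ is open at $M$, i.e.\ for every open neighborhood $U$ of $M$, the image $g(U)$ contains an open neighborhood of $g(M)$ in $\bb R^n$.

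Second, I would identify $\bb R^n$ with the space of monic real polynomials of degree $n$ and let $\mathscr P_H \subset \bb R^n$ denote the (open) set of Hurwitz polynomials. Because $M \in \ca B_s$, the polynomial $g(M) = \chi_{A-BM}$ has every root in the closed left half-plane with at least one conjugate pair on the imaginary axis. Shifting each such pair $\pm i\beta$ to $-\epsilon \pm i\beta$ while keeping all stable roots fixed yields a Hurwitz polynomial whose coefficients are arbitrarily close to those of $g(M)$ (by continuity of the elementary symmetric polynomials in the roots), so $g(M) \in \partial \mathscr P_H$. Combining the two steps, the open set $g(U)$ must meet $\mathscr P_H$, hence $U$ contains some $M'$ with $\chi_{A-BM'}$ Hurwitz, that is, $M' \in \ca H_s$. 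Since $U$ was an arbitrary neighborhood of $M$, we get $M \in \overline{\ca H_s}$; as $M \notin \ca H_s$, we conclude $M \in \partial \ca H_s$.

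The main obstacle is the root-perturbation claim of the middle step when the imaginary axis carries several conjugate pairs or a pair of higher multiplicity: one has to exhibit a \emph{real} perturbation that pushes every such root strictly into the open LHP while remaining close in coefficient space. This is resolved by splitting each repeated imaginary pair into distinct nearby stable conjugate pairs and invoking the continuity of the roots-to-coefficients map on $\bb C^n/S_n$ (the reverse of Theorem~\ref{thrm:continuous_spectrum}, namely continuity of the elementary symmetric functions). Once that perturbation is produced, the openness of $g$ delivered by the rank hypothesis translates it back into an element of $\ca H_s$ near $M$.
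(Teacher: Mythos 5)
Your proposal is correct and takes essentially the same route as the paper: you read the conclusion as $M \in \partial \ca H_s$ (which matches the sentence introducing the proposition and the paper's own proof, so your typo diagnosis is right), and like the paper you use the full-rank hypothesis via the constant rank theorem to make $g$ locally open at $M$, then perturb in coefficient space to find points of $\ca H_s$ arbitrarily close to $M$. The only difference is that you explicitly carry out the root-shifting argument showing Hurwitz polynomials lie arbitrarily close to $\chi_{A-BM}$ in coefficient space (e.g.\ one can simply move every root left by $\epsilon$, which preserves realness and needs no case analysis for repeated imaginary roots), a step the paper's proof compresses into ``it is clear that we may perturb the entries of $M$''; this fills in, rather than replaces, the paper's argument.
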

\begin{proof}
  If $\text{rank}(Dg(M)) = n$, then by Constant Rank Theorem~\cite{tu2010introduction}, there are open neighborhoods of $U$ about $M$, and $V$ about $g(M)$, and diffeomorphisms $\varphi$ of $U$ sending $M$ to the origin of $\bb M_{m \times n}(\bb R)$, $\phi$ of $V$ sending $g(M)$ to the origin of $\bb R^n$, such that $\phi \circ g \circ \varphi^{-1}$ is a projection, i.e.,
\begin{align*}
  \phi \circ g \circ \varphi^{-1}(M_{11}, \dots, M_{1n}, M_{21}, \dots, M_{2n}, \dots, M_{mn} ) = (M_{11}, \dots, M_{1n}),
\end{align*}
where $M_{ij}$'s denote the entries of $M$. Now it is clear that we may perturb the entries of $M$ to get a sequence in $\ca H_s$ converging to $M$ and a sequence in $\ca H_s^{c}$ converging to $M$.
\end{proof}
It should be noted that Proposition~\ref{prop:suff_cond_1} only provides a sufficient condition since the differential does not always have full rank. For example, if we take $(A, B)$ in the Brunovsky form with $A \in \bb M_4(\bb R)$ having $2$ blocks of size $2 \time 2$, then $A-BK$ has the form,
\begin{align*}
  \begin{pmatrix}
    0 & 1 & 0 & 0 \\
    a_1 & a_2 & a_3 & a_4 \\
    0 & 0 & 0 & 1 \\
    b_1 & b_2 & b_3 & b_4
  \end{pmatrix}.
\end{align*}
Direct computation reveals that rank of $Dg$ will be greater than $2$; as such, $M \in \ca B_s$ and $\text{rank}(Dg(M)) < 4$. It is thereby unclear whether $M$ is on the boundary.

We now proceed to examine the boundedness of $\ca H_s$ and $\ca H$. 
\begin{observation}
  $\ca H_s$ is unbounded.
\end{observation}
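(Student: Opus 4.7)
The plan is to exploit Wonham's pole placement theorem, which is available under the standing controllability hypothesis on $(A,B)$: the map $K \mapsto \chi_{A-BK}$ is surjective onto the set of monic real polynomials of degree $n$. Hence every Hurwitz polynomial is realised as $\chi_{A-BK}$ for some $K \in \ca H_s$, and by sending a Hurwitz family out to infinity in coefficient space one forces the corresponding feedback gains to diverge.

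Concretely, I would work with the continuous (in fact polynomial) map $\Phi : \bb M_{m \times n}(\bb R) \to \bb R^n$ sending $K$ to the coefficient vector of $\chi_{A-BK}$. For each $c>0$ the polynomial
\begin{equation*}
 (z+c)^n \;=\; z^n + \binom{n}{1}\,c\,z^{n-1} + \cdots + c^n
\end{equation*}
is Hurwitz, so pole placement yields some $K_c \in \ca H_s$ with
$\Phi(K_c) = \bigl(\binom{n}{1}c,\binom{n}{2}c^2,\ldots,c^n\bigr)$,
and in particular $\|\Phi(K_c)\| \to \infty$ as $c \to \infty$. If $\ca H_s$ were bounded, $\overline{\ca H_s}$ would be compact and the continuity of $\Phi$ would make $\Phi(\overline{\ca H_s})$ a bounded subset of $\bb R^n$, contradicting $\|\Phi(K_c)\|\to\infty$. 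Hence $\ca H_s$ must be unbounded.

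A more hands-on alternative route would proceed via the Brunovsky form of Observation~\ref{obs:feedback_equiv}: after reducing to $(A^\flat,B^\flat)$ and, by Observation~\ref{obs:full_column_rank}, further to a full-column-rank $B^\flat$ (the non-full-rank case being immediate from the unbounded $\bb R^n$ factors contributed by that observation), one restricts $K$ to the sparsity pattern that pairs the $j$-th row of $K$ with the $j$-th Brunovsky block, so that $A^\flat - B^\flat K$ becomes block diagonal with $j$-th block the $k_j \times k_j$ companion matrix of a freely prescribed polynomial; choosing that polynomial to be $(z+c)^{k_j}$ and letting $c \to \infty$ produces the required unbounded sequence in $\ca H_s^\flat$. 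The only mild obstacle in either approach is cleanly unpacking the pole placement statement; once that is in hand, the divergence of the Hurwitz targets immediately forces the divergence of their preimages.
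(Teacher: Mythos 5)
Your proposal is correct and follows essentially the same route as the paper: both invoke the Pole Shifting Theorem to place all closed-loop poles at $-c$ (equivalently, realize $(z+c)^n$ as the characteristic polynomial) and let $c \to \infty$. Your compactness/continuity argument via the coefficient map merely makes explicit the final step that the paper leaves implicit, namely why divergence of the spectrum forces divergence of the gains.
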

\begin{proof}
  This is a consequence of Pole Shifting Theorem~\cite{zabczyk2009mathematical}: for every $n$-tuple numbers $\{-j, \dots, -j\}$ with $j \in \bb R$, there exists $K_j$ such that the spectrum of $A-BK_j$ is exactly $\{-j, \dots, -j\}$.
\end{proof}
In the output feedback case, the set can be either bounded or unbounded. We first provide 
an example where $\ca H$ is bounded.
\begin{example1}
  Consider a controllable and observable triplet given by,
\begin{align*}
A =
  \begin{pmatrix}
   0 & 1 & 0 & 0 & 0 & 0 & 0 & 0\\ 
   0 & 0 & 1 & 0 & 0 & 0 & 0 & 0\\ 
   0 & 0 & 0 & 1 & 0 & 0 & 0 & 0\\ 
   0 & 0 & 0 & 0 & 1 & 0 & 0 & 0\\ 
   0 & 0 & 0 & 0 & 0 & 1 & 0 & 0\\ 
   0 & 0 & 0 & 0 & 0 & 0 & 1 & 0\\ 
   0 & 0 & 0 & 0 & 0 & 0 & 0 & 1\\ 
   -1 & -8 & -28 & -56 & -70 & -56 & -28 & -8\\ 
  \end{pmatrix},\,
B = \begin{pmatrix}
  0 & 0\\
  0 & 0\\
  0 & 0\\
  1 & 0\\
  0 & 0\\
  0 & 0\\
  0 & 0\\
  0 & 1\\
\end{pmatrix}, \, C=\begin{pmatrix}
  1 \\ -1 \\ 0 \\ 0 \\ 0 \\ 0 \\ 1 \\ -1
\end{pmatrix}^\top.
\end{align*}
The gain $K$ will then be parameterized by two scalars. The characteristic polynomial of the closed-loop system is given by,
\begin{align*}
  \chi_{A-BKC} &= z^8 + (8 - k_2)z^7 + (56k_1+28+ k_2)z^6 + (-29k_1+56)z^5 + (-27k_1 +70)z^4 \\
&\quad + (-27k_1+56)z^3 + (-29k_1 + 28)z^2 + (-k_2 -14k_1 +8)z + 1 + 70k_1+k_2.
\end{align*}
If $K \in \ca H$, the coefficients of $\chi_{A-BKC}$ are necessarily positive~\cite{zabczyk2009mathematical}; from which, we conclude that both $k_1$ and $k_2$ are bounded.
\end{example1}
We shall now observe sufficient conditions under which $\ca H$ is unbounded.
First, if either $B$ or $C$ do not have full column/row rank, then the set $\ca H$ is unbounded.
\begin{observation}
  If $\text{rank}(B) < m$ or $\text{rank}(C) < p$, then $\ca H$ is unbounded.
\end{observation}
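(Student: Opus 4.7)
The plan is to exploit the fact that a rank deficiency in either $B$ or $C$ produces a nontrivial kernel of the linear map $K \mapsto BKC$, and any perturbation of a stabilizing $K_0$ along this kernel preserves membership in $\ca H$ while allowing the Frobenius norm of $K$ to be sent to infinity.

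First I would fix a witness $K_0 \in \ca H$ (implicitly assuming $\ca H \neq \emptyset$, for otherwise the unboundedness claim would be vacuously false). Then I would handle the two rank conditions separately. If $\text{rank}(B) < m$, choose a nonzero $v \in \bb R^m$ with $Bv = 0$, together with any nonzero $w \in \bb R^p$, and form the ray
\begin{align*}
K_t = K_0 + t\, v w^\top, \qquad t \in \bb R.
\end{align*}
Since $B(vw^\top)C = (Bv)(w^\top C) = 0$, we have $A - BK_tC = A - BK_0C$ for every $t$, so $K_t \in \ca H$ for all $t$, and $\|K_t\|_F \to \infty$ as $t \to \infty$.

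If instead $\text{rank}(C) < p$, the rows of $C$ are linearly dependent, so there exists a nonzero $u \in \bb R^p$ with $u^\top C = 0$. Choosing any nonzero $v \in \bb R^m$ and setting $K_t = K_0 + t\, v u^\top$ gives $B(vu^\top)C = Bv\,(u^\top C) = 0$, hence again $A - BK_tC = A - BK_0C$ is Hurwitz and $\{K_t : t \in \bb R\} \subset \ca H$ is an unbounded line.

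There is no real obstacle here; the entire argument is two lines of linear algebra once one observes that the affine fibers of the map $K \mapsto BKC$ through any point of $\ca H$ lie entirely in $\ca H$, and that these fibers are unbounded affine subspaces precisely when the kernels of $K \mapsto BK$ or $K \mapsto KC$ are nontrivial. The only subtlety worth noting is the implicit nonemptiness assumption on $\ca H$, which is the standing hypothesis whenever topological properties of $\ca H$ are discussed.
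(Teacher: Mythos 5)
Your proof is correct and is essentially the paper's argument: the paper finds a zero column of $BE_c$ via elementary column operations and lets the corresponding entries of $K$ range freely, which is the same as your choice of a kernel vector $v$ with $Bv=0$ and the unbounded ray $K_0 + t\,vw^\top$. Your write-up is in fact slightly more complete, since you treat the $\text{rank}(C)<p$ case explicitly (the paper leaves it to symmetry) and flag the implicit nonemptiness assumption on $\ca H$.
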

\begin{proof}
If $\text{rank}(B)<m$, the product $B E_c$ has a zero column by performing elementary column operations encoded by $E_c$. Then the corresponding column of $K$ can be arbitrarily chosen without affecting the characteristic polynomial of $A-BKC$.
\end{proof}

We now observe that if $B$ and $C$ have full rank, and the dimensions $m$ and $p$ are large enough, then $\ca H$ is unbounded.
\begin{observation}
  \label{obs:cont_unbounded_1}
   If $n \le m+p -1$, then $\ca H$ is unbounded.
\end{observation}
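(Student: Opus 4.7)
The plan is to exhibit an unbounded one-parameter family of gains in $\ca H$ via a rank-one construction, combined with a root-locus argument and a dimension count. By the preceding observation on the rank of $B$ and $C$, we may assume that $B$ has full column rank and $C$ has full row rank; in particular $m, p \le n$. For vectors $u \in \bb R^m,\, v \in \bb R^p$ to be chosen and $t > 0$, consider the rank-one family
\begin{align*}
K(t) := t\, uv^\top,
\end{align*}
whose norm diverges as $t \to \infty$. It suffices to exhibit $u, v$ for which $K(t) \in \ca H$ for all sufficiently large $t$.

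The closed-loop matrix $M(t) = A - t(Bu)(C^\top v)^\top$ is a rank-one perturbation of $A$, so the matrix determinant lemma yields the linear pencil
\begin{align*}
\chi_{M(t)}(\lambda) = \chi_A(\lambda) + t\, r(\lambda), \qquad r(\lambda) := v^\top C\, \mathrm{adj}(\lambda I - A)\, Bu,
\end{align*}
with $\deg r \le n-1$ and leading coefficient $\alpha = v^\top CB u$. Standard root-locus analysis for rank-one perturbations shows that as $t \to \infty$, $n - 1$ of the roots of $\chi_{M(t)}$ converge to the roots of $r(\lambda)$ while the remaining root diverges like $-t\alpha$. Hence $M(t)$ is Hurwitz for all sufficiently large $t$ provided one can pick $u, v$ so that (i) $\alpha > 0$ and (ii) all roots of $r(\lambda)$ lie in the open left half plane.

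Expanding $\mathrm{adj}(\lambda I - A) = \sum_{k=0}^{n-1} \lambda^k A_k$, the $n$ coefficients of $r$ are the bilinear forms $c_k = v^\top (CA_k B) u$. The bilinear map $(u, v) \mapsto (c_0, \dots, c_{n-1}) \in \bb R^n$ factors through the Segre cone of rank-one tensors, whose dimension is $m + p - 1 \ge n$ under the hypothesis. Minimality of $(A, B, C)$ ensures that the Markov-type matrices $CA_k B$ are sufficiently non-degenerate that this composite map has full-dimensional image, containing an open subset of $\bb R^n$; this subset must intersect the non-empty open set of Hurwitz degree-$(n-1)$ polynomials with positive leading coefficient, yielding the required $u, v$. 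The principal technical obstacle is precisely this dimension count --- establishing that the image of the bilinear map genuinely fills an open neighborhood in $\bb R^n$ for every minimal $(A, B, C)$ satisfying $n \le m + p - 1$, rather than merely for generic ones --- and relies on extracting an $n$-dimensional span from the Markov parameters via the Kalman rank conditions.
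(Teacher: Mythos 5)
Your high-gain rank-one construction is a genuinely different route from the paper's proof, which simply invokes Theorem 3 of Kimura's pole-assignment result: under $n \le m+p-1$ the spectrum of $A-BKC$ can be placed arbitrarily close to any conjugate-symmetric $n$-tuple, and driving all poles to $-\infty$ forces $\|K\| \to \infty$ (via the trace of $A-BKC$). The root-locus half of your argument is sound: if $r(\lambda) = v^\top C\,\mathrm{adj}(\lambda I - A)Bu$ has degree exactly $n-1$, positive leading coefficient $v^\top C B u$, and all roots in $\bb H_{-}$, then $A - t(Bu)(v^\top C)$ is Hurwitz for all large $t$ while $\|t\,uv^\top\| \to \infty$. (One can even guarantee $v^\top CBu \neq 0$ for some $u,v$: after reducing to full-rank $B$ and $C$, the hypothesis gives $m+p > n$, so the column space of $B$ and the row space of $C$ cannot be mutually orthogonal.)

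However, the step you label ``the principal technical obstacle'' is not a technicality --- it is the entire content of the statement --- and the surrounding reasoning contains a non sequitur. Even if you established that the image of $(u,v) \mapsto (c_0,\dots,c_{n-1})$ contains a nonempty open subset of $\bb R^n$, that would not imply it meets the (nonempty, open) set of Hurwitz polynomials of degree $n-1$ with positive leading coefficient: two nonempty open cones in $\bb R^n$ can be disjoint (the positive and negative orthants, say), and your achievable set is exactly such a cone, being invariant under $u \mapsto su$. Separately, restricting to rank-one $K$ is a genuine restriction that Kimura's theorem does not make: his proof accumulates rank-one updates into a full gain matrix, and it is far from clear that a single direction $(u,v)$ making the squared-down scalar system minimum phase with relative degree one exists for \emph{every} minimal triple satisfying $n \le m+p-1$, rather than for generic ones. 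So the proof has a gap precisely where the work lies; the efficient repair is the paper's route of citing the pole-assignment theorem and deducing unboundedness from the trace.
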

\begin{proof}
  By Theorem $3$ in~\cite{kimura1975pole}, for any $n$ tuple of complex numbers $\Lambda$, invariant under complex conjugation, there is $K$ such that the spectrum of $A-BKC$ is arbitrarily close to $\Lambda$. This implies that $\ca H$ is nonempty and unbounded.
\end{proof}
The necessary condition for unboundedness of $\ca H$ in Observation~\ref{obs:cont_unbounded_1} can be relaxed if we only require that the statement holds for ``almost every'' triplet $(A, B, C)$.\footnote{``Almost every'' means that the property is valid except on an algebraic (Zariski closed) set.}
\begin{observation}
  Suppose that $(A, B, C)$ is controllable and observable, where $B$ and $C$ have full ranks $m$ and $p$, respectively.
     If $n < mp$, then for almost very controllable and observable triplet $(A, B, C)$, the set $\ca H_s$ is unbounded.
\end{observation}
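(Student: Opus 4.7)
The plan is to invoke a generic output-feedback pole-placement theorem. Under the dimension condition $n < mp$, the classical results of Hermann-Martin and Willems-Hesselink imply that for almost every controllable and observable triplet (i.e., outside a Zariski-closed subset of the parameter space), the polynomial map
\begin{align*}
\phi_{A,B,C}: K \in \bb M_{m \times p}(\bb R) \;\longmapsto\; \chi_{A-BKC} \in \bb R^n
\end{align*}
is dominant, where a monic degree-$n$ real polynomial is identified with its vector of coefficients. In particular, the image contains an open dense subset of $\bb R^n$, and hence meets the open Hurwitz region $\Omega \subset \bb R^n$ (polynomials with all roots in $\bb H_{-}$) in an open dense subset of $\Omega$.

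Given such a generic triplet, for each $\lambda > 0$ I would consider the target polynomial $p_\lambda(z) = (z+\lambda)^n \in \Omega$; by density, if $p_\lambda$ is not itself in the image of $\phi_{A,B,C}$, pick $\tilde p_\lambda$ in the image with all roots arbitrarily close to $-\lambda$, and choose any preimage $K_\lambda \in \phi_{A,B,C}^{-1}(\tilde p_\lambda)\subseteq \ca H$. The coefficient of $z^{n-k}$ in $p_\lambda$ equals $\binom{n}{k}\lambda^k$, which tends to $\infty$ as $\lambda \to \infty$, and this persists for small root-perturbations $\tilde p_\lambda$. On the other hand, every coefficient of $\chi_{A-BKC}$ is a fixed polynomial (of total degree at most $n$) in the entries of $K$, hence bounded on any bounded subset of $\bb M_{m\times p}(\bb R)$. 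Consequently $\|K_\lambda\|$ cannot remain bounded, and $\ca H$ is unbounded for the generic triplet.

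The main obstacle will be making the pole-placement step precise over the reals: Willems-Hesselink is naturally stated over $\bb C$, so one must either argue that, since $\phi_{A,B,C}$ is defined over $\bb R$, dominance of its complexification forces the real map to have Zariski-dense image in $\bb R^n$, or invoke the degree computations of Brockett-Byrnes to extract real preimages for real self-conjugate targets. Identifying the exceptional ``bad'' triplets as a proper Zariski-closed subset of the space of controllable-observable triples is a routine incidence-variety/dimension-count argument applied to the total map $(A,B,C,K) \mapsto \phi_{A,B,C}(K)$, and does not affect the structure of the core unboundedness argument, which is purely a coefficient blow-up estimate.
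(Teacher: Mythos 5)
Your overall strategy is the same as the paper's: the paper's entire proof is a one-line appeal to Proposition~2.8 of the cited Grassmannian pole-placement paper of Wang --- for $n < mp$, almost every controllable and observable triplet admits \emph{arbitrary} pole assignment by real static output feedback --- followed by ``the conclusion thus follows.'' Your coefficient blow-up argument (each coefficient of $\chi_{A-BKC}$ is a polynomial in the entries of $K$, hence bounded on bounded sets, while the targets $(z+\lambda)^n$ have coefficients $\binom{n}{k}\lambda^k \to \infty$) is precisely the deduction the paper leaves implicit, and that part is correct; it is in fact more detailed than the paper. (Incidentally, the set in the statement should be read as $\ca H$, the output-feedback set, as you did.)

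The genuine gap is in how you ground the pole-placement input. Hermann--Martin and Willems--Hesselink give dominance of the \emph{complexified} map under $n \le mp$, and neither of your proposed patches closes the real gap. Dominance of the complexification does force the real map to have Zariski-dense image (the generic rank of the Jacobian is attained at real points, since $\bb R^{mp}$ is Zariski dense in $\bb C^{mp}$), but Zariski density is far weaker than the Euclidean density your argument needs: the image of $x \mapsto x^2$ is Zariski dense in $\bb R$ yet misses an entire half-line. So you cannot conclude that polynomials with roots ``arbitrarily close to $-\lambda$'' lie in the real image --- a priori the real image could avoid the whole ray of Hurwitz targets, and then no $K_\lambda$ exists. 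The Brockett--Byrnes degree computation does not rescue this either: it concerns the critical case $n = mp$ (real solvability when the mapping degree is odd), not the strict inequality here; indeed the Willems--Hesselink counterexample ($m = p = 2$, $n = 4 = mp$) shows that $n \le mp$ genuinely fails over $\bb R$. The strict inequality $n < mp$ is exactly the hypothesis of Wang's theorem asserting generic arbitrary pole assignability \emph{over the reals}; citing that result (as the paper does) plugs the hole, and your blow-up estimate then completes the proof --- but with only the dominance statements you invoke, the argument as written does not go through.
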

\begin{proof}
  If $n < mp$, by Proposition $2.8$ in~\cite{wang1996grassmannian}, for ``almost every'' controllable and observable triplet, we may arbitrarily assign poles by the output feedback gain $K$; the conclusion thus follows.
\end{proof}
\subsection{Connectedness of Structured Hurwitz Stabilizing Feedback Gains}
In designing gradient-based algorithms that evolve directly on the set of structured stabilizing feedback gains, connectedness of this set plays an important role. If the set has several path-connected components, the outcome of gradient-based algorithms will be dependent on the initialization process. A sufficient condition to guarantee connectedness is proposed in~\cite{feng2018on}. For completeness, we provide a transparent proof here which shares the essence of the proof in~\cite{feng2018on}.
\begin{lemma}{(Lemma $2$ in~\cite{feng2018on})}
  \label{lemma:subspace_c}
  Suppose $\ca U$ is a linear subspace in $\bb M_{n \times n}(\bb R)$. If $B = I$ and $I \in \ca U$, then the set $\ca K_{\ca H_s} = \{K \in M_{n \times n}(\bb R): K \in \ca U, K \in \ca H_s\}$ is connected.
\end{lemma}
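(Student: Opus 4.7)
The plan is to exhibit, for any two points $K_0, K_1 \in \ca K_{\ca H_s}$, an explicit path in $\ca K_{\ca H_s}$ joining them, using the assumption $I \in \ca U$ to slide the eigenvalues of the closed-loop matrix arbitrarily far to the left.

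First I would note that because $B=I$, the closed-loop matrix is simply $A-K$, and that for any $K \in \ca U$ and any $t \in \bb R$ the matrix $K + tI$ again lies in $\ca U$, since $\ca U$ is a linear subspace containing $I$. In particular, for a fixed $K \in \ca K_{\ca H_s}$ and $t \geq 0$, the eigenvalues of $A-(K+tI)$ are exactly those of $A-K$ shifted by $-t$, so $K + tI \in \ca K_{\ca H_s}$ for every $t \geq 0$. Thus the ray $\tau \mapsto K + \tau I$, $\tau \in [0,t]$, is a path in $\ca K_{\ca H_s}$ joining $K$ to $K + tI$ for any $t \geq 0$.

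Next I would construct the ``connecting'' segment. For $s \in [0,1]$, let
\begin{align*}
M(s) \;=\; A - (1-s) K_0 - s K_1,
\end{align*}
and define $\mu(s) = \max \Re(\text{Sp}(M(s)))$. By Theorem~\ref{thrm:continuous_spectrum} and the continuity of the coordinatewise real part and maximum (as used in Lemma~\ref{lemma:open_H}), $\mu$ is continuous on the compact interval $[0,1]$, hence attains a finite maximum $\mu^\star$. Choose any real $t > \mu^\star$. For such $t$, the eigenvalues of
\begin{align*}
A - \bigl((1-s) K_0 + s K_1 + tI\bigr) \;=\; M(s) - tI
\end{align*}
all have real part strictly less than $\mu(s) - t \leq \mu^\star - t < 0$, so the straight-line segment $s \mapsto (1-s) K_0 + s K_1 + tI$ lies entirely in $\ca K_{\ca H_s}$ (it lies in $\ca U$ by linearity and the fact that $I \in \ca U$).

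Finally I would concatenate three pieces: slide $K_0$ to $K_0 + tI$ along the ray above, traverse the segment from $K_0 + tI$ to $K_1 + tI$, then slide back from $K_1 + tI$ to $K_1$. Each piece stays in $\ca U$ by linearity and in $\ca H_s$ by the eigenvalue shift argument, so the concatenation is a path in $\ca K_{\ca H_s}$ from $K_0$ to $K_1$. Since path-connectedness implies connectedness, this proves the claim. The only nontrivial step is ensuring a single $t$ works uniformly along the middle segment, which is exactly what compactness of $[0,1]$ plus continuity of the spectral abscissa delivers.
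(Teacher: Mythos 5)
Your argument is essentially identical to the paper's: both take the convex combination of the two gains, use compactness of $[0,1]$ and continuity of the spectral abscissa to get a uniform bound $\mu^\star$, and then shift by a large positive multiple of $I$ (which stays in $\ca U$) before traversing the straight segment. The only tiny slip is that you should take $t > \max(\mu^\star, 0)$ rather than merely $t > \mu^\star$, so that the two sliding rays $\tau \mapsto K_i + \tau I$ remain in the regime $\tau \ge 0$ where you verified stability; the paper handles the same point by splitting into the cases $c < 0$ and $c \ge 0$.
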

\begin{proof}
  For $K_1, K_2 \in \ca K$, note that the map $\gamma: [0,1] \to \bb R$ given by $$t \mapsto (1-t)K_1 + tK_2 \mapsto \max \text{\bf Re}(\text{Sp}((1-t)(A-K_1) + t(A-K_2)))$$ is continuous. As $[0,1]$ is compact, $\gamma$ achieves its maximum value at $c$. If $c < 0$, $(1-t)K_1 + tK_2$ is a continuous path between $K_1$ and $K_2$. If $c \ge 0$, we first connect $K_1$ to $K_1+c'I$ and $K_2$ to $K_2 +c'K_2$ with $c' > c$ by convex paths, i.e., $t \mapsto (1-t)K_j + t(K_j + c'I)$. Note that these paths stay in $\ca K_{\ca H_s}$ if $c'> 0$. Then the convex path $t \mapsto (1-t)(K_1 + c'I) + t(K_2 + c'I)$ stays in $\ca K_{\ca H}$ as the maximum of $\gamma': [0, 1] \to \bb R$ defined by
$$t \mapsto \max \text{\bf Re}(\text{Sp}((1-t)(A-K_1 - c'I) + t(A-K_2-c'I)), $$
is exactly $c - c' < 0$. Hence, $K_1 \to K_1 + c'I \to K_2 + c'I \to K_2$ is a continuous path in $\ca K$, where each arrow is connected by a convex path. 
\end{proof}
As discussed in~\cite{feng2018on}, one may generate $2^{n-1}$ connected components in the set of structured stabilizing feedback gains by employing the properties of Schwarz matrix (see Theorem $2$ in~\cite{feng2018on} for details). Here, we present a conceptually simple construction to show the exponential dependence of the number of connected components on the dimension of the state; in fact, this example leads to a lower bound of $2^{\floor{n/2}}$. One salient feature of our construction is that a similar idea can be extended to discrete-time systems (see Lemma~\ref{lemma:exponential_components_discrete}).
\begin{proposition}
  \label{prop:cont_components_2}
  Suppose $\ca U \subseteq \bb M_{2 \times 2}(\bb R)$ is a linear subspace given by
  \begin{align*}
     \ca U = \{U \in \bb M_{2 \times 2}(\bb R): u_{12}=-u_{21}, u_{11}=u_{22}=0\}.
  \end{align*}
  If
  \begin{align*}
    A = \begin{pmatrix}
      -1 & -1 \\
      1 & 0
    \end{pmatrix}, \quad B = I,
  \end{align*}
  then the set $\ca K = \{K \in \bb M_{2 \times 2}(\bb R): K \in \ca U, K \in \ca H_s\}$ has exactly two connected components.
\end{proposition}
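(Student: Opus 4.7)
The plan is to observe that $\ca U$ is one-dimensional, parametrize $\ca K$ by a single real scalar, compute the characteristic polynomial of the closed-loop matrix, and read off the connected components directly from the Routh--Hurwitz criterion for quadratics.

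First I would write every $K \in \ca U$ as $K(\lambda) = \lambda J$, where
\begin{align*}
J = \begin{pmatrix} 0 & -1 \\ 1 & 0 \end{pmatrix}
\end{align*}
and $\lambda = k_{21} \in \bb R$. This is a linear isomorphism $\bb R \to \ca U$, hence a homeomorphism, under which $\ca K$ pulls back to the subset $\{\lambda \in \bb R : A - \lambda J \text{ is Hurwitz}\}$. So counting connected components of $\ca K$ reduces to counting connected components of this subset of $\bb R$.

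Second I would compute the closed-loop matrix and its characteristic polynomial. Since $B = I$,
\begin{align*}
A - \lambda J = \begin{pmatrix} -1 & -1+\lambda \\ 1-\lambda & 0 \end{pmatrix},
\end{align*}
so $\Tr(A - \lambda J) = -1$ and $\det(A - \lambda J) = -(-1+\lambda)(1-\lambda) = (1-\lambda)^2$. Hence
\begin{align*}
\chi_{A-\lambda J}(z) = z^2 + z + (1-\lambda)^2.
\end{align*}
By the Routh--Hurwitz criterion for a real monic quadratic $z^2 + bz + c$, both roots lie in $\bb H_-$ iff $b > 0$ and $c > 0$. The first inequality holds unconditionally since $b = 1$; the second becomes $(1-\lambda)^2 > 0$, which holds iff $\lambda \neq 1$.

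Therefore $\ca K$ is homeomorphic to $\bb R \setminus \{1\}$, whose connected components are precisely $(-\infty,1)$ and $(1,\infty)$, giving exactly two connected components of $\ca K$. There is no genuine obstacle here; the substantive point of the construction is the careful choice of $A$: the skew constraint on $\ca U$ forces $\Tr(A-\lambda J)$ to be independent of $\lambda$ (so the first Routh--Hurwitz inequality is automatic), while the specific $A$ makes the determinant a perfect square $(1-\lambda)^2$ with a unique real zero, which is exactly what disconnects $\ca K$ into two pieces rather than leaving it connected or cutting it into more pieces.
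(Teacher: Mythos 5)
Your proof is correct and follows essentially the same route as the paper's: parametrize $\ca U$ by a single scalar, compute $\chi_{A-K}(z)=z^2+z+(1-\lambda)^2$, and apply Routh--Hurwitz to find that stability fails only at $\lambda=1$, yielding two components. Your write-up is in fact slightly cleaner, since the paper's displayed characteristic polynomial and its stated intervals $(-\infty,0)$, $(0,\infty)$ contain typos that your version corrects to $(-\infty,1)$ and $(1,\infty)$.
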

\begin{proof}
  We note that for $K \in \ca U$, $A-BK$ has the form,
  \begin{align*}
    \begin{pmatrix}
      -1 & -(1-t) \\
      1-t & 0
    \end{pmatrix},
  \end{align*}
  with the characteristic polynomial $\chi_{A-BK}(\lambda)= (\lambda+1)^2 \lambda + (1-t)^2$. By Routh-Hurwitz criterion, $\chi_{A-BK}$ is stable if and only if $t \neq 1$. Hence, $\ca K$ has two connected components: $(-\infty, 0)$ and $(0, \infty)$.
\end{proof}
We now construct an instance of the structured synthesis problem that leads to $2^{\floor{n/2}}$ connected components in the set of stabilizing feedback gains. In the following, we will provide an explicit construction for the case where $n$ is even. When $n$ is odd, we consider a block diagonal matrix for $A$ with an even dimension of $n-1$ and a constant $-1$ on its diagonal.
\begin{lemma}
    For $n=2k$, suppose that
    \begin{align*}
      A = \begin{pmatrix}
        -1 & -1 & & & & & & & \\
        1 & 0 & & & & & & &\\
        & & -1 & -1 & & & && \\
        & & 1 & 0 & & & && \\
        & & & & \ddots & & && \\
        & & & & & \ddots &  & & \\
        & & & & & &  & -1&-1 \\
        & & & & &  &  & 1&0 \\
      \end{pmatrix}, \qquad B = I,
    \end{align*} and let $\ca U \subseteq \bb M_{n \times n}(\bb R)$ be a linear subspace defined by
    \begin{align*}
      \ca U = \{U \in \bb M_{n \times n}(\bb R): u_{12} = - u_{21}, \dots, u_{2k-1, 2k} = - u_{2k, 2k-1}, \text{ other entries are $0$'s} \}.
    \end{align*}
    Then the set $\ca K =\{K \in \bb M_{n \times n} (\bb R): K \in \ca U, K \in \ca S_s\}$ has exactly $2^{k}$ connected components.
\end{lemma}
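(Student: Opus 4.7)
My approach is to reduce to the $k=1$ case already settled in Proposition~\ref{prop:cont_components_2} by exploiting the fact that both $A$ and every $K\in\ca U$ are block diagonal with the same $k\times k$ block pattern. Intuitively each $2\times 2$ block contributes an independent binary choice, so the number of components is $2^k$.

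First, I would observe that $A$ consists of $k$ identical diagonal blocks $A_0=\begin{pmatrix}-1&-1\\1&0\end{pmatrix}$, and the definition of $\ca U$ forces every $K\in\ca U$ to be block diagonal with $j$-th block of the form $K_j=\begin{pmatrix}0& -t_j\\ t_j&0\end{pmatrix}$ for some $t_j\in\bb R$. Thus the map $\Phi:\ca U\to\bb R^k$, $K\mapsto(t_1,\dots,t_k)$, is a linear isomorphism (hence a diffeomorphism). Since $B=I$, the closed-loop matrix $A-BK=A-K$ is itself block diagonal, with $j$-th diagonal block equal to $\begin{pmatrix}-1& -(1-t_j)\\ 1-t_j& 0\end{pmatrix}$, exactly the matrix studied in Proposition~\ref{prop:cont_components_2}.

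Next, I would use the fact that the spectrum of a block-diagonal matrix is the union (with multiplicity) of the spectra of its blocks, so
\begin{equation*}
\chi_{A-BK}(\lambda) \;=\; \prod_{j=1}^{k}\bigl(\lambda^2+\lambda+(1-t_j)^2\bigr),
\end{equation*}
and $A-BK$ is Hurwitz stable if and only if each $2\times 2$ block is Hurwitz stable. By Proposition~\ref{prop:cont_components_2}, the $j$-th block is Hurwitz stable exactly when $t_j\neq 1$. Consequently $\Phi$ restricts to a homeomorphism between $\ca K$ and $(\bb R\setminus\{1\})^k$.

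To finish, I would note that $\bb R\setminus\{1\}$ has exactly two connected components, namely $(-\infty,1)$ and $(1,\infty)$, so $(\bb R\setminus\{1\})^k$ is the disjoint union of the $2^k$ open boxes obtained by choosing a side of $1$ for each coordinate; each such box is a product of connected intervals and hence connected, so these are precisely the connected components of $\ca K$. The only step that genuinely needs care is the decoupling of the Hurwitz condition across blocks, but this is immediate from the displayed product factorization of $\chi_{A-BK}$; once that is in hand the component count is mechanical. (I would note in passing that the lemma statement writes $\ca S_s$, but since it sits in the Hurwitz section and invokes Proposition~\ref{prop:cont_components_2}, this is read as $\ca H_s$.)
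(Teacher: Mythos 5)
Your proof is correct and takes essentially the same route as the paper, whose entire argument is the one-line observation that $\ca K$ is the Cartesian product of the $k$ copies of the two-component parameter set from Proposition~\ref{prop:cont_components_2}; you have simply written out the block-diagonal decoupling and the component count in full detail. Your reading of $\ca S_s$ as $\ca H_s$, and your identification of the components as $(-\infty,1)$ and $(1,\infty)$ rather than the $(-\infty,0)$ and $(0,\infty)$ stated in Proposition~\ref{prop:cont_components_2}, correctly repair two typos in the source.
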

\begin{proof}
  We only need to observe that $\ca K$ is the Cartesian product of the intervals obtained in Proposition~\ref{prop:cont_components_2}.
\end{proof}
\section{Properties of Schur Stable Feedback Controllers $\ca H$}
\label{sec:schur}
In this section, we discuss topological properties of Schur stabilizing feedback gains. These properties include:
\begin{enumerate}
  \item The set $\ca S$ and $\ca S_s$ are both open.
    \item The set $\ca S_s$ is contractible and regular open, i.e., $\bar{\ca S}_s^{\circ} = \ca S_s$. In general, the set $\ca S$ is not connected.
      \item $\ca S_s$ is not bounded and $\ca S$ could be either bounded and unbounded.
      \item If $K$ is constrained to linear subspaces, sufficient conditions are proposed to guarantee that the set of structured stabilizing gains is path-connected. We also show that there exists an instance such that the set $\ca S$ has $2^{\floor{\frac{n}{2}}}$ connected components.
\end{enumerate}

For single-input state-feedback systems, given a controllable pair $(A, b)$, it was observed in~\cite{bu2019siso} that the bilinear transformation $z \mapsto (z+1)(z-1)^{-1}$ provides a homeomorphism between the set of Hurwitz stabilizing gains and the set of Schur stabilizing gains. It was also observed in~\cite{bu2019siso} that the bilinear transformation does not provide a homeomorphism between the set of Hurwitz stabilizing output-feedback gains and the set of Schur stabilizing output-feedback gains. For MIMO case, bilinear transformation does not yield a homeomorphism even for state-feedback case. For example, if
\begin{align*}
  A = \begin{pmatrix}
    0 & 1 & 0 & 0 \\
    0 & 0 & 0 & 0 \\
    0 & 0 & 0 & 1 \\
    0 & 0 & 0 & 0
  \end{pmatrix}, \quad B=\begin{pmatrix}
    0 & 0 \\
    1 & 0 \\
    0 & 0 \\
    0 & 1
  \end{pmatrix},
\end{align*}
then $0 \in \ca S_s$. However, under the bilinear transformation $X \mapsto (X-I)^{-1}(X+I)$, $A-B 0$ will be mapped to
\begin{align} \label{eq:closed_A}
  \begin{pmatrix}
    -1 & -2 & 0 & 0 \\
    0 & -1 & 0 & 0\\
    0 & 0 & -1 & -2\\
    0 & 0 & 0 & -1
  \end{pmatrix}.
\end{align}
It is clear there is no $K \in \ca H_s$ such that $A-BK$ will yield the matrix~(\ref{eq:closed_A}). Therefore, it is necessary to study the set-theoretic properties for discrete LTI systems independently. 
\begin{lemma}
  \label{lemma:open_S}
  $\ca S$ is open in $\bb M_{m \times p}(\bb R)$ and $\ca S_s$ is open in $\bb M_{m \times n}(\bb R)$.
\end{lemma}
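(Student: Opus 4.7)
The plan is to mimic the argument used in Lemma~\ref{lemma:open_H} for the Hurwitz case, with the spectral radius replacing the spectral abscissa. Concretely, I would introduce the modulus-max map $\mu : \bb C^n \to \bb R$ given by $v \mapsto \max_i |v_i|$, observe that it is continuous and permutation-invariant, and invoke the quotient topology characterization (Theorem $3.73$ in~\cite{lee2010introduction}) to descend it to a continuous map $\tilde{\mu} : \bb C^n/S_n \to \bb R$ with $\mu = \tilde{\mu} \circ \pi$.

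Next I would combine $\tilde{\mu}$ with the continuity of the spectrum map $M \mapsto \text{Sp}(M)$ (Theorem~\ref{thrm:continuous_spectrum}) to conclude that the composition
\begin{align*}
K \mapsto A - BKC \mapsto \text{Sp}(A - BKC) \mapsto \rho(A - BKC)
\end{align*}
is a continuous function $h : \bb M_{m \times p}(\bb R) \to \bb R$, since each arrow is continuous (the first is affine, the second is Theorem~\ref{thrm:continuous_spectrum}, and the third is $\tilde{\mu}$). Since $\ca S = h^{-1}([0,1))$ is the preimage of an open subset of $\bb R$ under a continuous map, $\ca S$ is open in $\bb M_{m \times p}(\bb R)$.

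For $\ca S_s$, the same argument applies verbatim with $A - BKC$ replaced by $A - BK$, yielding a continuous function $h_s : \bb M_{m \times n}(\bb R) \to \bb R$ with $\ca S_s = h_s^{-1}([0,1))$. There is no real obstacle here; the only subtlety worth flagging is the use of the quotient topology on $\bb C^n/S_n$ to ensure $\tilde{\mu}$ is well-defined and continuous, but this was already carefully set up in the preliminaries and reused in Lemma~\ref{lemma:open_H}, so the Schur case is a direct analogue.
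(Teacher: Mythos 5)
Your proposal takes essentially the same route as the paper's proof: descend the spectral-radius map to the quotient $\bb C^n/S_n$ exactly as in Lemma~\ref{lemma:open_H} and conclude by pulling back an open set. The one slip is the claim that $[0,1)$ is an open subset of $\bb R$ --- it is not; either take the codomain to be $[0,\infty)$, in which $[0,1)$ is relatively open (this is what the paper does), or observe that $h^{-1}([0,1)) = h^{-1}((-\infty,1))$ because the spectral radius is nonnegative.
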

\begin{proof}
  The proof proceeds similar to the continuous case in Lemma~\ref{lemma:open_H}. We only need to observe that the map $f_{\ca S}: \bb C^n/ S_n \to \bb [0, \infty)$ is continuous by passing to the quotients (see details in Lemma~\ref{lemma:open_H}) and $[0, 1)$ is open in $[0, \infty)$.
\end{proof}
Contrary to the discrete SISO system~\cite{bu2019siso}, the set of stabilizing state-feedback gains for MIMO systems is unbounded.
\begin{observation}
  \label{lemma:bounded}
   $\ca S_s$ is generally unbounded.
\end{observation}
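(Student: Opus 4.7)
The plan is to exploit the Brunovsky controller form provided by Observation~\ref{obs:feedback_equiv} and the block structure of $A^\flat$ to exhibit an unbounded ray of gains sitting inside $\ca S_s$. Since feedback equivalence is a diffeomorphism of the space of gains, boundedness is preserved, so it suffices to work with the Brunovsky pair $(A^\flat, B^\flat)$.

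First I would dispose of the rank-deficient input case: if $\text{rank}(B) = r < m$, Observation~\ref{obs:full_column_rank} identifies $\ca S_s$ with $\hat{\ca S}^\flat \times \bb R^{n(m-r)}$, which is immediately unbounded. So the interesting case is full column rank $m$, in which case $A^\flat$ decomposes into $m$ controllability blocks of sizes $k_1 \ge \cdots \ge k_m$ with $k_1 + \cdots + k_m = n$.

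The core step is to build a family $\{K_t\}_{t \in \bb R}$ with $\|K_t\| \to \infty$ along which $A^\flat - B^\flat K_t$ stays nilpotent (hence trivially Schur stable). The guiding principle is that $B^\flat K$ only writes into the ``input rows'' of the closed-loop matrix (rows $k_1$, $k_1+k_2$, \ldots, $n$); everything else in $A^\flat - B^\flat K$ is inherited from $A^\flat$ and is already strictly upper-triangular. When $m \ge 2$, one can place the large parameter $t$ in an entry of $K$ whose contribution to $A^\flat - B^\flat K$ lands strictly above the diagonal, so the closed-loop matrix remains strictly upper-triangular with spectrum $\{0\}$. The $n=4$, $m=2$ Brunovsky pair already exhibited in the excerpt illustrates this concretely: placing $t$ in the $(1,4)$ entry of $K$ makes $A^\flat - B^\flat K$ strictly upper-triangular, producing an unbounded nilpotent family inside $\ca S_s^\flat$. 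Pulling this family back through the diffeomorphism of Observation~\ref{obs:feedback_equiv} transfers the unboundedness to $\ca S_s$.

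The single Brunovsky block case ($r = m = 1$) is the only exception, because then the closed-loop matrix is a companion matrix whose characteristic polynomial coefficients are affine functions of the entries of $K$, and Schur polynomials of a fixed degree form a bounded subset of coefficient space; this is why the statement reads ``generally'' rather than ``always''. The main obstacle in the general proof is selecting the entry of $K$ so that its contribution genuinely lies in the strictly upper-triangular part for arbitrary block sizes $k_1,\ldots,k_m$; the recipe is to pick a row of $K$ indexed by a shorter block and a column whose position sits strictly above the ``input row'' of a later (taller) block, after which verifying triangularity of $A^\flat - B^\flat K$ reduces to direct bookkeeping on the sparsity pattern of $B^\flat$.
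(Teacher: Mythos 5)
Your proof is correct but follows a genuinely different route from the paper's. The paper does not exploit the Brunovsky block structure beyond reducing to it: for full column rank $B$ it invokes the Pole-Shifting Theorem to produce $K_1$ with $A-BK_1 = S^{-1}\Lambda S$ diagonalizable, then finds a nonzero $K_2'$ such that $SBK_2'S^{-1}$ is strictly upper triangular in the $S$-coordinates, so that $K_1 + tK_2'$ preserves the spectrum while $\|K_1+tK_2'\|\to\infty$. You instead work entirely inside the Brunovsky coordinates and produce an unbounded ray along which the closed-loop matrix is literally strictly upper triangular, hence nilpotent and Schur stable; your $n=4$, $m=2$ example (parameter $t$ in the $(1,4)$ entry of $K$) is verifiably correct, and the general recipe amounts to choosing an entry of row $i$ of $K$ in a column with index exceeding $k_1+\cdots+k_i$, which exists whenever there are at least two blocks. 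Your approach is more elementary (no pole placement, no similarity transformation to track) and makes the role of $m\ge 2$ transparent, while the paper's argument needs only controllability and full column rank rather than the detailed block sizes. One small wording slip: since the Brunovsky blocks satisfy $k_1\ge\cdots\ge k_m$, a ``later'' block is shorter, not taller, so your phrase about placing the column ``above the input row of a later (taller) block'' should read that the column index must strictly exceed the input-row index $k_1+\cdots+k_i$ of the chosen row; this does not affect the validity of the construction. Your explanation of the single-input exception (affine bijection onto coefficient space combined with boundedness of the Schur region of monic polynomials) correctly accounts for the qualifier ``generally,'' which the paper leaves implicit.
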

\begin{proof}
  It suffices to assume that $(A, B)$ is in the Brunovsky form since changing of coordinates and translation do not change boundedness. Without loss of generality, if $B$ does not have full column rank, then the statement of this observation is valid since we can choose the last row of $K$ arbitrarily. Now suppose that $B$ has full column rank.
This on the other hand, allows constructing a sequence of stabilizing feedback gains with an unbounded norm. Since $(A, B)$ is controllable, according to Pole-Shifting Theorem, for any polynomial of degree $n^{th}$, there is a $K$ such that $p(t) = \chi_{A-BK}(t)$. By taking $p(t)=\Pi_{i=1}^n (t-\lambda_i)$, where $\lambda_i \in \bb R$ are distinct, there is $K_1 \in \bb M_{m \times n}(\bb R)$ such that
  \begin{align*}
    A - BK_1 = S^{-1} 
    \begin{pmatrix}
      \lambda_1 & 0 & \dots & 0 \\
      0 & \lambda_2 & \dots & 0 \\
      \vdots & \vdots & \ddots & 0 \\
      0 & 0 & \vdots & \lambda_n
    \end{pmatrix} S.
  \end{align*}
    Putting $\tilde{B} = SB$, since $B$ has full column rank, we may choose an (column) elementary matrix $E$ such that $(\tilde{B}E)_{nm} = 0$. Let $K_2 = E(0, 0, \dots, e_m)$, where $0 \in \bb R^m$ and $e_n = (0, \dots, 1)^T \in \bb R^m$. Then $SBK_2 = (0,\dots, 0, \tilde{B}Ee_m )\eqqcolon L$ is upper triangular with diagonals all $0$'s. Letting $K_2' =  K_2 S$, then
  \begin{align*}
    A - B(K_1 + K_2') &= S^{-1} \Lambda S - S^{-1} SB K_2  S = S^{-1} (\Lambda + L) S.
  \end{align*}
  This shows that adding $K_2'$ will not change the eigenvalues of $A-BK_1$. Now we define a sequence of feedback controllers $\{K^n\}_{n=1}^{\infty} \coloneqq \{K_1 + n K_2'\}_{n=1}^{\infty}$. Then clearly $K^n$ is stabilizing and $\|K^n\| \to \infty$ as $n \to \infty$.
\end{proof}
For output feedback gains, $\ca S$ can be either bounded or unbounded.
\begin{example1}
  \label{ex:boundedness_S}
  We provide two examples such that the set $\ca S$ is bounded in the first case and unbounded in the second.
  \begin{enumerate}
  \item Suppose that $(A, B, C)$ is a controllable and observable system specified by,
  \begin{align*}
    A = \begin{pmatrix}
      0 & 1 & 0 & 0\\
      -\frac{1}{2} & -1 & 0 & 1 \\
      0 & 0 & 0 & 1 \\
      0 & 0 & -\frac{1}{2} & -1
    \end{pmatrix}, \; B = \begin{pmatrix}
      0 & 0 \\
      1 & 0 \\
      0 & 0 \\
      0 & 1
    \end{pmatrix}, \; C=\begin{pmatrix}
      1 & 0 & 0 & 0 \\
      0 & 1 & 0 & 0
    \end{pmatrix}.
  \end{align*}
  Then the set $\ca S$ is bounded.
  \item Suppose that $(A, B, C)$ is a controllable and observable system specified by,
    \begin{align*}
    A = \begin{pmatrix}
      0 & 1 & 0 & 0\\
      -\frac{1}{2} & -1 & 0 & 1 \\
      0 & 0 & 0 & 1 \\
      0 & 0 & -\frac{1}{2} & -1
    \end{pmatrix}, \; B = \begin{pmatrix}
      0 & 1 & 0 \\
      1 & 0 & 0\\
      0 & 0 & 0\\
      0 & 0 & 1
    \end{pmatrix}, \; C=\begin{pmatrix}
      0 & 0 & 0 & 1 \\
      0 & 1 & 0 & 0 \\
      1 & 0 & 0 & 0
    \end{pmatrix}.
                  \end{align*}
                  Then $\ca S$ is not bounded.
  \end{enumerate} For part $(a)$, 
  $K \in \bb M_{2 \times 2}(\bb R)$, i.e., $K$ is parametrized by four parameters with $K=\begin{pmatrix}
    k_1 & k_2 \\
    k_3 & k_4
  \end{pmatrix}$. The characteristic polynomial of the closed-loop system is given by,
  \begin{align*}
    \chi_{A-BKC}(t) = t^4 + (-k_2 + 2)t^3 + (-k_1 - k_2 - k_4 + 2)t^2 + (1 - \frac{k_2}{2} - k_1 - k_3)t + \frac{1}{4}- \frac{k_1}{2}.
  \end{align*}
  By Vieta's formula, the coefficients are symmetric polynomials in zeros of the closed-loop system $A-BKC$. It follows that all coefficients of $\chi_{A-BKC}$ are bounded since $A-BKC$ is Schur stable. To see that $\ca S$ is bounded, we only need to observe that $k_1, k_2, k_3, k_4$ are bounded by the triangle inequality. 
  
  For part $(b)$, $K \in \bb M_{3 \times 3}(\bb R)$, i.e., $K$ is parametrized by four parameters with, $$K=\begin{pmatrix}
    k_1 & k_2 & k_3\\
    k_4 & k_5 & k_6 \\
    k_7 & k_8 & k_9
  \end{pmatrix}.$$
We note that $0 \in \ca S$. Putting,
\begin{align*}
  K_c =
  \begin{pmatrix}
              0 & 0 & 0 \\
              0 & 0 & 0 \\
              0 & 0 & c 
  \end{pmatrix},
\end{align*}
with $c \in \bb R$, we observe that,
\begin{align*}
  BK_cC =
  \begin{pmatrix}
              0 & 0 & 0 & 0 \\
              0 & 0 & 0 & 0\\
              0 & 0 & 0 & 0\\
              c & 0 & 0 &0 
  \end{pmatrix}.
\end{align*}
It thus follows that $K_c \in \ca S$ for every $c \in \bb R$.
Hence, in this case, $\ca S$ is not bounded.
  \end{example1}
Next, we shall show that the set of state-feedback stabilizing controllers is contractible and regular open. This follows from an important observation: if $(A, B)$ is in the Brunovsky norm, then under a nonlinear scaling of the entries of $K$, the eigenvalues of the corresponding $A-BK$ will be scaled accordingly. The precise statement of this property
is as follows.
  \begin{lemma}
    \label{lemma:eigenvalue_brunovsky}
    Suppose that $(A, B)$ is in the Brunovsky form where $B$ has full column rank. For every $K \in \bb M_{m \times n}(\bb R)$, denote the spectrum of $A-BK$ by $\sigma(A-BK)= \{\lambda_1, \dots, \lambda_n\}$ and put $(K)_{\alpha}$ as follows: for each $j$, the $j^{\text{th}}$ row of $(K)_{\alpha}$ is given by
        \begin{align*}
  (K)_{j, \cdot} = \Big( &
    \alpha^{k_1} (K)_{j1} , ~\alpha^{k_1-1} (K)_{j2} ,~ \dots , \alpha(K)_{jk_1} , \dots ,~ \alpha^{k_r} (K)_{j(k_{r-1}+1)} ,~ \dots ,~ \alpha (K)_{j k_r}
  \Big).
    \end{align*}
    Therefore, $(\lambda, v) = (\lambda, (v_1, \dots, v_n)^\top)$ is a left eigenvalue-eigenvector pair of $A-BK$,  if and only if $(\alpha \lambda, \tilde{v})$ is a left eigenvalue-eigenvector pair of $A-B(K)_\alpha$, where $r \neq 0$ and
  \begin{align*}
    \tilde{v} = (\underbrace{\alpha^{k_1 - 2} v_1, \alpha^{k_1-3} v_2, \dots, v_{k_1 - 1}, \frac{v_{k_1}}{\alpha}}_{k_1}, \dots, \underbrace{\alpha^{k_r-2} v_{k_1 + \dots + k_{r-1} +1}, \dots, v_{k_r-1}, \frac{v_{k_r}}{\alpha}}_{k_r});
  \end{align*}
    consequently $\sigma(A-B(K)_{\alpha})=\{\alpha\lambda_1, \dots, \alpha \lambda_n\}$.
  \end{lemma}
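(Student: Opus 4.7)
Because $B$ has full column rank in the Brunovsky form, $r = m$ (otherwise the last $m-r$ columns vanish), so $B$ has exactly $m$ nonzero entries: a $1$ at each position $([i,k_i],i)$, where $[i,\ell] := k_1 + \cdots + k_{i-1} + \ell$ denotes the global index of the $\ell$-th slot inside block $i$. Consequently $BK$ vanishes outside rows $[i,k_i]$, on which it equals the $i$-th row $K_{i,\cdot}$ of $K$.

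The whole lemma will fall out of a single diagonal similarity. Let $D\in\bb M_n(\bb R)$ be the diagonal matrix with
\[
D_{[i,\ell],[i,\ell]} \;=\; \alpha^{\,k_i - 1 - \ell}, \qquad i = 1,\dots,m,\; \ell = 1,\dots,k_i.
\]
I claim that
\[
A - B(K)_\alpha \;=\; \alpha\, D^{-1}(A-BK)\, D,
\]
which I verify by conjugating the two summands separately. The nonzero entries of $A$ are $1$'s at $([i,\ell],[i,\ell+1])$; under $\alpha D^{-1}(\cdot)D$ these acquire the factor $\alpha\cdot\alpha^{-(k_i-1-\ell)}\cdot\alpha^{\,k_i-1-(\ell+1)}=1$, so $\alpha D^{-1}AD = A$. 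The nonzero entries of $BK$ are $K_{i,[i'',\ell'']}$ at $([i,k_i],[i'',\ell''])$; under the same conjugation they acquire the factor $\alpha\cdot\alpha^{-(-1)}\cdot\alpha^{\,k_{i''}-1-\ell''} = \alpha^{\,k_{i''}-\ell''+1}$, which is precisely the scaling defining the $[i'',\ell'']$-th entry of the $i$-th row of $(K)_\alpha$. Hence $\alpha D^{-1}(BK)D = B(K)_\alpha$, and the similarity identity follows.

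The two conclusions of the lemma are now immediate. Similarity preserves the spectrum, so $\sigma(A - B(K)_\alpha) = \alpha\,\sigma(A-BK) = \{\alpha\lambda_1,\dots,\alpha\lambda_n\}$. For left eigenvectors, rewriting the identity as $\alpha(A-BK) = D\bigl(A - B(K)_\alpha\bigr)D^{-1}$ and inserting $v^\top(A-BK)=\lambda v^\top$ gives $(v^\top D)\bigl(A - B(K)_\alpha\bigr) = \alpha\lambda\,(v^\top D)$, so $\tilde v := Dv$ is a left eigenvector of $A-B(K)_\alpha$ at the eigenvalue $\alpha\lambda$; reading off coordinates, $\tilde v_{[i,\ell]} = \alpha^{\,k_i-1-\ell} v_{[i,\ell]}$, matching the vector in the statement (exponents $k_i-2,k_i-3,\dots,0,-1$ within block $i$). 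The reverse implication follows by applying the argument with $\alpha$ replaced by $\alpha^{-1}$. The only real obstacle is notational: carefully tracking the block index $i$, the in-block index $\ell$, and the three families of exponents $k_i-1-\ell$, $k_{i''}-\ell''+1$, and the boundary value $-1$ at $\ell = k_i$; once $D$ is written down as above, the verification collapses to the two one-line computations given.
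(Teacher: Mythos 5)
Your proof is correct, and it takes a genuinely different route from the paper's. The paper verifies the left-eigenvector identity $\tilde v^\top (A-B(K)_\alpha) = \alpha\lambda\,\tilde v^\top$ directly, coordinate by coordinate, after writing out the rows of $A-B(K)_\alpha$ in the Brunovsky structure (and in fact only displays the computation for one representative component, leaving the rest as ``similarly''). You instead package the entire lemma into the single similarity identity $A - B(K)_\alpha = \alpha\, D^{-1}(A-BK)D$ with the explicit diagonal matrix $D_{[i,\ell],[i,\ell]} = \alpha^{k_i-1-\ell}$, verified by checking the conjugation factor on the two families of nonzero entries (the superdiagonal $1$'s of $A$ and the rows of $BK$ sitting at positions $[i,k_i]$); I checked both exponent computations and they are right, and $Dv$ reproduces exactly the $\tilde v$ of the statement. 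What your approach buys: the spectrum scaling $\sigma(A-B(K)_\alpha)=\alpha\,\sigma(A-BK)$ and the ``if and only if'' both drop out for free (similarity is invertible, so the eigenvector correspondence $v\mapsto Dv$ is a bijection), whereas the paper's componentwise check really only establishes one direction explicitly and is harder to audit. What it costs: you need $\alpha\neq 0$ for $D^{-1}$ to exist — but the statement already presupposes this (the condition written as ``$r\neq 0$'' is evidently a typo for $\alpha\neq 0$, and $\tilde v$ contains $v_{k_i}/\alpha$), so nothing is lost. Your one structural premise, that full column rank of $B$ in Brunovsky form forces $r=m$ so that $BK$ is supported on rows $[i,k_i]$, is exactly right and is also what the paper's displayed matrix encodes.
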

  \begin{proof}
We note that the first $k_1$ rows of $A-B(K)_\alpha$ have the following form,
   \begin{align*}
&\left[\begin{matrix}
       0 & 1 & \cdots & 0 \\
       0 & 0 &  \cdots & 0  \\
       \vdots & \vdots & \vdots & \ddots \\
       -\alpha^{k_1}(K)_{11} & -\alpha^{k_1-1}(K)_{12} &  \cdots & -\alpha(K)_{1 k_1} \\
\end{matrix}\right.\\
&~\qquad
\left.\begin{matrix}
 \cdots & \cdots & \cdots & \cdots  & \cdots \\
  \cdots & \cdots & \cdots & \cdots  & \cdots\\
   \vdots & \cdots & \cdots   & \cdots & \cdots\\
    \cdots & -\alpha^{k_r}(K)_{1(n-k_r+1)} & -\alpha^{k_r-1}(K)_{1(n-k_r+2)} &  \cdots & -\alpha(K)_{1n}\\
\end{matrix}\right] .
  \end{align*}
 Now if $(\lambda, v) = (\lambda, (v_1, \dots, v_n)^\top)$ is a left eigenvalue-eigenvector pair of $A-BK$, it suffices to check the equality of $\tilde{v}^{\top} (A- B(K)_{\alpha})  = \alpha \lambda \tilde{v}^{\top}$ for each component of the vector. We can similarly check the second and other components component of $\tilde{v}$: \begin{align*}
    &\alpha^{k_1 - 2} v_1 - \frac{(K)_{12} v_{k_1}}{\alpha} - \dots - \frac{(K)_{k_r 2} v_{k_r}}{\alpha} \\
    & \quad = \alpha^{k_1 - 1} ( \frac{v_1}{\alpha} - \frac{(K)_{1 2} v_{k_1}}{\alpha} - \dots   - \frac{(K)_{k_r 2} v_{k_r}}{\alpha} ) \\
                                                                                       &\quad = \alpha^{k_1-2} \lambda v_2 
                                                                                        = \lambda \alpha (\alpha^{k_1-3} v_2) \\
    &\quad = \lambda \alpha \tilde{v}_2.
  \end{align*}
  \end{proof}
  Lemma~\ref{lemma:eigenvalue_brunovsky} immediately implies that $\ca S_s$ is contractible.
  \begin{lemma}
    \label{lemma:contractible_S}
    $\ca S_s$ is contractible.
  \end{lemma}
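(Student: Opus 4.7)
The plan is to use Lemma~\ref{lemma:eigenvalue_brunovsky} to construct an explicit contraction of $\ca S_s$ to the zero matrix. By Observation~\ref{obs:feedback_equiv} and the remark following it, contractibility of $\ca S_s$ for $(A,B)$ is equivalent to contractibility of $\ca S_s^{\flat}$ for the Brunovsky pair, since feedback equivalence yields a diffeomorphism between these two sets. Moreover, by Observation~\ref{obs:full_column_rank}, the rank-deficient case reduces to a product of $\hat{\ca S}^{\flat}$ with Euclidean factors, and products of contractible spaces are contractible. So without loss of generality I may assume that $(A,B)$ is in Brunovsky form with $B$ of full column rank.

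The crucial structural input is that in this setting $A$ itself is nilpotent, so $\rho(A)=0<1$, and hence $0 \in \ca S_s$. This gives a natural candidate contraction point. I will define the homotopy
\begin{align*}
  H: \ca S_s \times [0,1] \to \ca S_s, \qquad H(K,t) = (K)_{1-t},
\end{align*}
where $(K)_\alpha$ is the rescaled matrix defined in Lemma~\ref{lemma:eigenvalue_brunovsky}. Continuity of $H$ is immediate because each entry of $(K)_{1-t}$ is a polynomial in the entries of $K$ and in $1-t$. At $t=0$ we get $(K)_1=K$, and at $t=1$ we get $(K)_0=0$, so $H$ interpolates between the identity and the constant map to $0$.

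It remains to verify that $H(K,t)\in\ca S_s$ for all $(K,t)\in\ca S_s\times[0,1]$. By Lemma~\ref{lemma:eigenvalue_brunovsky}, if $\mathrm{Sp}(A-BK)=\{\lambda_1,\dots,\lambda_n\}$ then $\mathrm{Sp}(A-B(K)_{1-t})=\{(1-t)\lambda_1,\dots,(1-t)\lambda_n\}$. Since $K\in\ca S_s$, we have $|\lambda_i|<1$, and since $|1-t|\le 1$ for $t\in[0,1]$, it follows that $|(1-t)\lambda_i|<1$ for every $i$ (with the bound being strict because $|\lambda_i|<1$; at $t=1$ the eigenvalues are all $0$). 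Thus $\rho(A-BH(K,t))<1$, i.e., $H(K,t)\in\ca S_s$. This exhibits $\ca S_s$ as contractible to $0$.

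The main subtlety — and the step I would spend most care on — is confirming that Lemma~\ref{lemma:eigenvalue_brunovsky} remains applicable when $\alpha=1-t$ ranges all the way down to $0$; the lemma is stated for $\alpha\ne 0$, but the scaled matrix $(K)_0$ is still well-defined (it is simply the zero matrix) and the eigenvalue conclusion extends by continuity of the spectrum (Theorem~\ref{thrm:continuous_spectrum}). Everything else reduces to routine bookkeeping, and the contractibility claim follows.
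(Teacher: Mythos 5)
Your proof is correct and follows essentially the same route as the paper: reduce to the Brunovsky form with $B$ of full column rank, then use the eigenvalue-scaling property of Lemma~\ref{lemma:eigenvalue_brunovsky} to show that $H(K,t)=(K)_{1-t}$ is a homotopy from the identity to the constant map $0$ that stays inside $\ca S_s$. Your treatment is in fact more careful than the paper's, in particular in handling the endpoint $\alpha=0$ and in spelling out the reduction via Observations~\ref{obs:feedback_equiv} and~\ref{obs:full_column_rank}.
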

  \begin{proof}
    According to Observation~\ref{obs:full_column_rank}, it suffices to assume that $B$ has full column rank.
    We only need to observe that the map $H: \ca S_s \times [0, 1] \to \ca S_s$ given by $(K, t) \mapsto (K)_{1-t}+t0$ yields a homotopy between the identity map and the constant map $0$ by Lemma~\ref{lemma:eigenvalue_brunovsky}. 
  \end{proof}
  Indeed, Lemma~\ref{lemma:eigenvalue_brunovsky} allows us to characterize the boundary of $\ca S_s$ as well. Putting $\ca B_s = \{K \in \bb M_{m \times n}(\bb R): \rho(A-BK) = 1\}$, we show next that $\ca B_s$ coincides with the boundary $\partial \ca S_s$.\footnote{This is not immediate. We certainly have $\partial \ca S_s \subseteq \ca B_s$; however it is now clear that every point of $\ca B_s$ is a boundary point of $\ca S_s$.}
  \begin{lemma}
    \label{lemma:regular_open_S}
    $\partial \ca S_s = \ca B_s$ and $\ca S_s$ is regular open.
  \end{lemma}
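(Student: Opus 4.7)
The plan is to first use the reductions already established in the paper to assume $(A,B)$ is in Brunovsky form with $B$ of full column rank, and then to exploit the scaling $K \mapsto (K)_\alpha$ from Lemma~\ref{lemma:eigenvalue_brunovsky} to perturb any point of $\ca B_s$ both into $\ca S_s$ (via $\alpha<1$) and outside $\overline{\ca S_s}$ (via $\alpha>1$).

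First I would observe that the two inclusions $\partial \ca S_s \subseteq \ca B_s$ and $\overline{\ca S_s}\subseteq \{K:\rho(A-BK)\le 1\}$ are immediate from Theorem~\ref{thrm:continuous_spectrum} and the continuity of the spectral radius: $\ca S_s$ is open (Lemma~\ref{lemma:open_S}), so $\partial \ca S_s = \overline{\ca S_s}\setminus \ca S_s$, and any boundary point $K$ is a limit of points with $\rho(A-BK_n)<1$, forcing $\rho(A-BK)\le 1$, while $K\notin \ca S_s$ rules out $\rho(A-BK)<1$. Hence $\partial \ca S_s \subseteq \ca B_s$, and similarly $\overline{\ca S_s}\setminus \ca S_s \subseteq \ca B_s$.

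The substantive direction is $\ca B_s \subseteq \partial \ca S_s$. By Observation~\ref{obs:feedback_equiv} and Observation~\ref{obs:full_column_rank}, both $\ca S_s$ and $\ca B_s$ transform by a common diffeomorphism under the passage to the Brunovsky form (and the $(m-r)$ free rows affect neither spectrum); thus it suffices to treat the case where $(A,B) = (A^\flat, B^\flat)$ with $B$ of full column rank. Fix any $K \in \ca B_s$, so $\rho(A-BK)=1$. For $0<\alpha<1$ set $K_\alpha := (K)_\alpha$ as in Lemma~\ref{lemma:eigenvalue_brunovsky}; that lemma gives $\mathrm{Sp}(A-BK_\alpha)=\alpha\cdot\mathrm{Sp}(A-BK)$, so $\rho(A-BK_\alpha)=\alpha<1$ and hence $K_\alpha \in \ca S_s$. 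Since the entry-wise scalings $\alpha^{k_j-i+1}$ all tend to $1$ as $\alpha\to 1^-$, we have $K_\alpha \to K$ in $\bb M_{m\times n}(\bb R)$. This produces a sequence in $\ca S_s$ converging to $K$, so $K \in \overline{\ca S_s}\setminus \ca S_s \subseteq \partial \ca S_s$, completing the equality $\partial \ca S_s = \ca B_s$.

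For the regular-open statement we must show $\overline{\ca S_s}^{\circ}=\ca S_s$. The inclusion $\ca S_s \subseteq \overline{\ca S_s}^{\circ}$ is immediate since $\ca S_s$ is open (Lemma~\ref{lemma:open_S}) and contained in $\overline{\ca S_s}$. Conversely, let $K \in \overline{\ca S_s}\setminus \ca S_s$; by the remark above, $K \in \ca B_s$. Using the same scaling with $\alpha>1$ sufficiently close to $1$, Lemma~\ref{lemma:eigenvalue_brunovsky} gives $\rho(A-BK_\alpha)=\alpha>1$, so $K_\alpha \notin \overline{\ca S_s}$ (as $\overline{\ca S_s}\subseteq\{K:\rho(A-BK)\le 1\}$), and $K_\alpha \to K$ as $\alpha\to 1^+$. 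Thus every neighborhood of $K$ meets the complement of $\overline{\ca S_s}$, so $K \notin \overline{\ca S_s}^{\circ}$. This shows $\overline{\ca S_s}^{\circ}\subseteq \ca S_s$, giving regular openness.

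The main obstacle is verifying that the two-sided perturbation by $(K)_\alpha$ actually works in the Brunovsky setting: one must check that $\alpha$ can be taken on both sides of $1$ (for which the crucial input is Lemma~\ref{lemma:eigenvalue_brunovsky}, which was proved without the assumption $\alpha\in(0,1)$), and that the reduction to Brunovsky form with full column rank $B$ preserves the boundary $\ca B_s$ and the property of being regular open—both of which follow because the reductions are diffeomorphisms onto their images and diffeomorphisms preserve closure, interior, and boundary.
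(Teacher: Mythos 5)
Your proposal is correct and follows essentially the same route as the paper: the paper's proof also takes $K \in \ca B_s$ and uses the scaling of Lemma~\ref{lemma:eigenvalue_brunovsky} to produce the sequences $(K)_{1-1/n} \in \ca S_s$ and $(K)_{1+1/n} \in \ca S_s^{c}$ converging to $K$, concluding $\partial \ca S_s = \ca B_s$ and $\overline{\ca S_s}^{\circ} = (\ca S_s \cup \ca B_s)^{\circ} = \ca S_s$. You simply spell out the easy inclusion $\partial \ca S_s \subseteq \ca B_s$ and the reduction to the Brunovsky form with full-column-rank $B$, which the paper leaves implicit.
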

  \begin{proof}
    If $K \in \ca B_s$, it is clear $\{(K)_{1-1/n}\}_{n \ge 2} \subseteq \ca S_s$ is a sequence converging to $K$ and $\{(K)_{1+1/n}\}_{n \ge 2} \subset \ca S_s^c$ converges to $K$ as well. It thus follows that,
    \begin{align*}
      \bar{\ca S_s}^{\circ} = (\ca S_s \cup \ca B_s )^{\circ} = \ca S_s.
    \end{align*}
  \end{proof}
  Next we show that $\ca S_s$ is path-connected. This is straightforward by Lemma~\ref{lemma:contractible_S} since contractible sets are path-connected and simply connected. We shall include an independent proof with a similar flavor to the proof of Lemma~\ref{lemma:contractible_H}, i.e., we will identify the set $\ca S_s$ as a continuous image of the solution set of an LMI.
  \begin{lemma}
    \label{lemma:connected_S}
    $\ca S_s$ is path-connected.
    \end{lemma}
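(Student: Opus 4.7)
The plan is to mimic the LMI-based argument used for $\ca H_s$ in the proof of Lemma~\ref{lemma:contractible_H}, but with the discrete-time Lyapunov inequality in place of the continuous one, and with a Schur-complement step to overcome the nonlinearity that the discrete setting introduces.

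By the discrete Lyapunov theorem, $K\in\ca S_s$ if and only if there exists $P\succ 0$ such that
\begin{align*}
  P - (A-BK)P(A-BK)^\top \succ 0.
\end{align*}
Applying the Schur complement, this inequality is equivalent to the block LMI
\begin{align*}
  \begin{pmatrix} P & (A-BK)P \\ P(A-BK)^\top & P \end{pmatrix} \succ 0,
\end{align*}
and now under the change of variable $Y = KP$ the off-diagonal term becomes $AP - BY$, which is \emph{linear} in the pair $(P,Y)$. Define
\begin{align*}
  \ca L_{\ca S} = \left\{(P,Y) : P\succ 0,\ \begin{pmatrix} P & AP-BY \\ (AP-BY)^\top & P \end{pmatrix} \succ 0 \right\}.
\end{align*}
Since both constraints are strict LMIs in $(P,Y)$, the set $\ca L_{\ca S}$ is convex, hence path-connected.

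Next I would verify that $\psi_Q : \ca L_{\ca S} \to \ca S_s$ defined by $(P,Y)\mapsto YP^{-1}$ is well-defined, continuous, and surjective. Well-definedness follows by reversing the Schur-complement argument: for any $(P,Y)\in\ca L_{\ca S}$, the gain $K=YP^{-1}$ satisfies the discrete Lyapunov inequality, so $K\in\ca S_s$. Continuity is immediate since matrix inversion is smooth on $P\succ 0$. For surjectivity, given any $K\in\ca S_s$, one solves the discrete Lyapunov equation with a fixed $Q\succ 0$ for the unique $P\succ 0$ and sets $Y = KP$; the resulting pair lies in $\ca L_{\ca S}$ and maps to $K$.

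The path-connectedness of $\ca S_s$ then follows: the continuous image of a path-connected set is path-connected. The main obstacle is the Schur complement manipulation --- in contrast to the continuous-time case where the substitution $Y=KP$ directly linearizes \eqref{eq:myeq1_h}, here the quadratic term $(A-BK)P(A-BK)^\top$ makes the naive substitution fail, and one must first rewrite the stability condition as a block LMI before linearizing. Once this is done, the rest of the argument is a direct transcription of the strategy used for $\ca H_s$.
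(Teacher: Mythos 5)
Your proof is correct, but it follows a different route from the paper's. You linearize the classical discrete Lyapunov inequality $P-(A-BK)P(A-BK)^\top\succ 0$ via a Schur complement and the substitution $Y=KP$, obtaining the convex set $\ca L_{\ca S}$ and the continuous surjection $(P,Y)\mapsto YP^{-1}$ onto $\ca S_s$. The paper instead invokes the extended (dilated) LMI characterization of de Oliveira et al.\ \cite{de1999new}, which introduces an additional slack variable $G$ decoupled from the Lyapunov matrix $X$, and realizes $\ca S_s$ as the continuous image of that convex feasible set under $(X,L,G)\mapsto LG^{-1}$. Both arguments are sound: each exhibits $\ca S_s$ as the continuous image of a convex, hence path-connected, set, and neither map is injective (for a fixed Schur-stabilizing $K$ the admissible $P$, respectively $(X,G)$, form an open set), which is why neither yields the diffeomorphism available in the continuous-time case (cf.\ the remark following Lemma~\ref{lemma:connected_S}). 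Your version is the more elementary and self-contained of the two --- it is the exact discrete-time transcription of the argument behind Lemmas~\ref{lemma:diffeo_H} and~\ref{lemma:contractible_H}, requiring only the discrete Lyapunov theorem and a Schur complement rather than an external parametrization result; the paper's choice buys nothing extra for this particular statement, though the dilated LMI is the more flexible tool when one wants to impose structure on $K$ through $L$ and $G$. One small point of care in your write-up: you correctly claim only surjectivity and continuity of $\psi_Q$, which is all that path-connectedness requires; do not be tempted to upgrade this to contractibility of $\ca S_s$, since the map is not injective --- the paper establishes contractibility separately via the scaling argument of Lemma~\ref{lemma:eigenvalue_brunovsky}.
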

    \begin{proof}
By Theorem $3$ in~\cite{de1999new}, $K \in \ca S$ if and only if there exists $X \succ 0$,
$G \in  \bb M_{m \times n} (\bb R)$ and $L \in \bb M_{n \times n} (\bb R)$ such that the following LMI is feasible
  \begin{align*}
    \begin{pmatrix}
      X & AG + BL \\
      G^\top A^\top + L^\top B^\top & G + G^\top - X
    \end{pmatrix} \succ 0.
  \end{align*}
  Note that $\ca S$ is the image of the continuous map $\psi: (X, L, G) \mapsto LG^{-1}$. But the solution set is convex (since if $(X_1, L_1, G_1)$ and $(X_2, L_2, G_2)$ are feasible, then $(1-\lambda)(X_1, L_1, G_1) + \lambda(X_2, L_2, G_2)$ is also feasible for the LMI when $\lambda \in (0,1)$). Hence $\ca S$ is connected.
  \end{proof}
  \begin{remark}
    One is tempted to use the LMI to get a diffeomorphism as Lemma~\ref{lemma:diffeo_H}. However, such a construction is not straightforward; the LMI devised in~\cite{de1999new} involves a linear inequality that leads to a non-injective $\psi$.
  \end{remark}
  \subsection{Connectedness of Structured Feedback Gains}
  If $K$ is constrained to a linear subspace $\ca U$ with $B=I$ (corresponding to the case where agents have direct control over their own dynamics; see \S\ref{sec:structured}), a sufficient condition for the set of stabilizing feedback gains to be connected is that $A \in \ca U$.
  \begin{lemma}
    \label{lemma:subspace_d}
    If $B=I$ and $A \in \ca U$, then the set $\{K \in \bb M_{n \times n}(\bb R): \rho(A-K)< 1, K \in \ca U\}$ is connected.
  \end{lemma}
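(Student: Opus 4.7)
The plan is to mimic the continuous-time argument of Lemma \ref{lemma:subspace_c}, but exploiting a multiplicative rescaling of the residual $A-K$ rather than an additive shift. The key observation is that since $A \in \ca U$, the element $A$ itself belongs to the set (because $\rho(A-A) = \rho(0) = 0 < 1$), so it suffices to show that every $K$ in the set is path-connected to $A$ through the set.

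Given any $K \in \ca U$ with $\rho(A-K) < 1$, I would consider the straight-line path
\begin{align*}
  \gamma : [0,1] \to \bb M_{n \times n}(\bb R), \qquad \gamma(t) = (1-t)A + tK.
\end{align*}
Since $\ca U$ is a linear subspace and both $A, K \in \ca U$, we have $\gamma(t) \in \ca U$ for every $t \in [0,1]$. Moreover, a direct computation gives
\begin{align*}
  A - \gamma(t) = A - (1-t)A - tK = t(A - K),
\end{align*}
so $\rho(A - \gamma(t)) = t\, \rho(A-K) < 1$ for all $t \in [0,1]$. Hence $\gamma$ is a continuous path in $\ca K_{\ca S_s} = \{K \in \bb M_{n \times n}(\bb R) : \rho(A-K) < 1,\, K \in \ca U\}$ from $A$ to $K$.

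Since any two elements $K_1, K_2 \in \ca K_{\ca S_s}$ can each be joined to $A$ by the paths above, concatenating them produces a path from $K_1$ to $K_2$ inside $\ca K_{\ca S_s}$. Therefore the set is path-connected, and in particular connected. There is essentially no obstacle here, as the argument is immediate from the hypothesis $A \in \ca U$; the subtlety is simply recognizing that multiplication by the scalar $t$ (which comes for free once $A$ sits in the subspace) plays the role that the additive shift $c'I$ played in the Hurwitz case, where the analogous hypothesis was $I \in \ca U$.
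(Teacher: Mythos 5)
Your proof is correct and is essentially the paper's own argument: the paper first translates so that $A=0$ and then uses the convex path $t \mapsto (1-t)K$, which after undoing the translation is exactly your path $(1-t)A+tK$ with $\rho(A-\gamma(t)) = t\,\rho(A-K) < 1$. No substantive difference.
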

  \begin{proof}
    By translation, without loss of generality, we suppose that $A=0$. We only need to observe that $K \in \ca K$, $t \mapsto (1-t)K + t 0$ is a continuous path between $K$ and $0$.
  \end{proof}
  In the case where only a subset of agents have direct control over their own dynamics, a sufficient condition to guarantee connectedness of $\ca K$ is that the structure of the corresponding system $A$ matches the graph topology and the entries of $A$ are nonnegative. Moreover, the closed-loop system $A-BK$ is constrained to be a nonnegative (feedback) system~\cite{luenberger1979introduction}.
  \begin{lemma}
    Suppose that $B=\begin{pmatrix}
      I_{m \times m}\\
      \bf 0_{(n-m) \times m}
    \end{pmatrix}$, $\ca U \subseteq \bb M_{m \times n}(\bb R)$ is a linear subspace, $A$ is nonnegative and the first $m$ rows $A_{1:m} \in \ca U$. The set $\{K \in \bb M_{m \times n}(\bb R): K \in \ca U, \rho(A-BK) < 1, A-BK \text{ is nonnegative}\}$ is connected.
  \end{lemma}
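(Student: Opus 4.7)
The plan is to show path-connectedness by constructing, for every feasible $K$, an explicit straight-line path from $K$ to the distinguished point $A_{1:m}$, and then verifying that $A_{1:m}$ itself is feasible. The crucial ingredient is the classical monotonicity of the Perron root: if $0 \le M_1 \le M_2$ entrywise then $\rho(M_1) \le \rho(M_2)$.

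First I would verify that $A_{1:m}$ lies in the feasible set (we may assume the set is nonempty, otherwise connectedness is vacuous). It is in $\ca U$ by hypothesis, and the closed-loop matrix
\begin{align*}
A - B A_{1:m} = \begin{pmatrix} 0_{m \times n} \\ A_{m+1:n} \end{pmatrix}
\end{align*}
has zero top $m$ rows and the unchanged nonnegative block $A_{m+1:n}$ below, so it is nonnegative. For any feasible $K$, nonnegativity of $A - BK$ forces $A_{1:m} - K \ge 0$, hence $A - B A_{1:m} \le A - BK$ entrywise (the bottom blocks agree, and the top block drops from $A_{1:m} - K \ge 0$ down to $0$). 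Monotonicity then yields $\rho(A - B A_{1:m}) \le \rho(A - BK) < 1$.

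Second, for an arbitrary feasible $K$, I would take the straight-line path $K(s) = s A_{1:m} + (1-s) K$ for $s \in [0,1]$. Since $\ca U$ is a subspace containing both endpoints, $K(s) \in \ca U$. A direct computation gives
\begin{align*}
A - B K(s) = \begin{pmatrix} (1-s)(A_{1:m} - K) \\ A_{m+1:n} \end{pmatrix},
\end{align*}
which is nonnegative for every $s \in [0,1]$ and entrywise monotone non-increasing in $s$; in particular $A - B K(s) \le A - B K$. Monotonicity of $\rho$ on nonnegative matrices then gives $\rho(A - B K(s)) \le \rho(A - BK) < 1$ along the entire path, so $K(s)$ stays in the feasible set and connects $K$ to $A_{1:m}$. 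Concatenating two such paths joins any feasible $K_1, K_2$ through the common hub $A_{1:m}$.

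The main obstacle to anticipate is that the feasible set is genuinely non-convex: already for $2 \times 2$ nonnegative matrices one can exhibit $M_1, M_2$ with $\rho < 1$ whose midpoint has $\rho \ge 1$, so a naive convex combination between two arbitrary feasible gains may leave the set. The construction above avoids this precisely because the diagonal-embedding structure of $B$ forces the path $K(s)$ to trace an \emph{entrywise monotone} family of nonnegative closed-loop matrices dominated by $A - BK$, allowing the Perron--Frobenius monotonicity to do the work that convexity alone cannot.
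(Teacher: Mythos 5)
Your proof is correct and follows essentially the same route as the paper's: both connect an arbitrary feasible $K$ by a straight line to the gain $A_{1:m}$ that zeroes out the top $m$ rows of the closed-loop matrix, and both justify that the spectral radius stays below $1$ along the path via entrywise monotonicity of $\rho$ on nonnegative matrices (the paper phrases this through Gelfand's formula applied to $\|A_t^k\|_F$, you invoke Perron--Frobenius monotonicity directly, which is the same underlying fact). Your write-up is in fact slightly more careful in that it explicitly checks feasibility of the hub point $A_{1:m}$ and that the path preserves nonnegativity.
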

  \begin{proof}
    We may first choose $K_0$ such that the first $m$ rows of $A-BK_0$ is $0$. Hence without loss of generality, we assume that the first $m$ rows of $A$ are zero. If $K \in \ca K$, then $A-BK$ has the form,
\begin{align*}
A-BK =
  \begin{pmatrix}
    K\\
    A_{m:n, \cdot}
  \end{pmatrix}.
\end{align*}
We consider the convex path $t \mapsto (1-t)K + t0 \eqqcolon K_t$. But according to Gelfand's formula, putting $A_t = A-BK_t$ results in,
\begin{align*}
  \rho(A_t) = \lim_{k \to \infty}\|A_t^k\|_F^{\frac{1}{k}} \le \lim_{k \to \infty}\|A_0^k\|_F^{\frac{1}{k}} = \rho(A_0) < 1.
\end{align*}
Thereby, the convex path $t \mapsto (1-t)K + t0$ is contained in $\ca K$. As such, $\ca K$ is connected.
  \end{proof}
  In~\cite{feng2018on}, an instance was constructed to show that the set of stabilizing structured feedback gains for continuous systems could have exponentially many connected components. We shall provide an analogous construction for discrete-time systems.
  \begin{proposition}
    \label{prop:discrete_2}
    If $A = \begin{pmatrix} 0 & a^2 \\ 0 & 0 \end{pmatrix}$ with $|a| > 2$, $B=I$ and a subspace $\ca U = \{U \in \bb M_{2 \times 2}(\bb R): u_{ii} = 0 \text{ for }i=1,2\}, -a^2 u_{21}=u_{12}$, then the set $\ca K = \{K \in \bb M_{2 \times 2}(\bb R): K \in \ca U, \rho(A-K)< 1\}$ has two connected components.
  \end{proposition}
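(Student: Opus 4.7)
The plan is to reduce the matrix-level stabilization condition to a scalar inequality via the one-dimensional parametrization of $\ca U$, and then read off the connected components from a sign analysis of a quadratic.

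First I would parametrize: every $K \in \ca U$ has the form $K(t) = \begin{pmatrix} 0 & -a^2 t \\ t & 0 \end{pmatrix}$ for a unique $t \in \bb R$, so $t \mapsto K(t)$ is a linear homeomorphism between $\bb R$ and $\ca U$. A direct computation gives
\begin{align*}
  A - K(t) = \begin{pmatrix} 0 & a^2(1+t) \\ -t & 0 \end{pmatrix},
\end{align*}
whose characteristic polynomial is $\chi_{A-K(t)}(\lambda) = \lambda^2 + a^2 t(1+t)$. The eigenvalues of $A-K(t)$ are therefore $\pm\sqrt{-a^2 t(1+t)}$ and, regardless of sign, the spectral radius satisfies $\rho(A-K(t))^2 = a^2\,|t(1+t)|$. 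Thus $K(t) \in \ca K$ if and only if $|f(t)| < 1/a^2$, where $f(t)=t^2+t$.

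Next I would carry out the scalar analysis. The parabola $f$ attains its global minimum $-1/4$ at $t=-1/2$, and the hypothesis $|a|>2$ gives $1/a^2 < 1/4$. Hence $f(t) = -1/a^2$ has two real roots $t_1 < -1/2 < t_2$, while $f(t) = 1/a^2$ has two real roots $t_3 < t_1$ and $t_4 > t_2$. The open set $\{t: f(t) < 1/a^2\}$ equals $(t_3,t_4)$, and $\{t: f(t) > -1/a^2\}$ equals $(-\infty,t_1)\cup(t_2,\infty)$. Intersecting yields
\begin{align*}
  \{t : |f(t)| < 1/a^2\} \;=\; (t_3,t_1) \,\cup\, (t_2,t_4),
\end{align*}
two disjoint nonempty open intervals.

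To conclude, since $t \mapsto K(t)$ is a homeomorphism onto $\ca U$, the set $\ca K$ is homeomorphic to $(t_3,t_1)\cup(t_2,t_4)$ and so has exactly two connected components. There is no genuine obstacle here; the only substantive input is the strict inequality $|a|>2$, which is precisely what forces the middle forbidden band $[t_1,t_2]$ to be nonempty and to separate the two stable arcs. (If $|a|\le 2$ the parabola would not dip below $-1/a^2$, and the stable set would be a single interval.)
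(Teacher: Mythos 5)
Your proof is correct and takes essentially the same route as the paper's: both reduce to the one-parameter parametrization of $\mathcal{U}$ (your $t$ is the paper's $-\alpha$), compute the spectral radius as $|a|\sqrt{|t(1+t)|}$, and split the resulting quadratic inequality into two disjoint open intervals; your write-up is in fact slightly more explicit about why the intersection of the two sublevel conditions disconnects. The only quibble is the closing aside: at $|a|=2$ the point $t=-1/2$ is still excluded (since $|f(-1/2)|=1/4=1/a^2$), so the stable set remains disconnected there, and a single interval occurs only for $|a|<2$ --- but this does not affect the proposition, which assumes $|a|>2$.
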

  \begin{proof}
    We observe that $\ca K$ can be parametrized by $1$ parameter $\alpha$, i.e., if $K \in \ca K$, then $K = \begin{pmatrix}
      0 & a^2\alpha \\
      -\alpha & 0
    \end{pmatrix}$. It follows that $A-K$ has the affine form,
    \begin{align*}
      \begin{pmatrix}
        0 & a^2(1-\alpha) \\
        \alpha & 0
      \end{pmatrix}.
    \end{align*}
    The modulus of the eigenvalues of this matrix is given by $\kappa = |a| \sqrt{|\alpha-\alpha^2|}$. In particular, if $|a| > 2$, the inequality $\kappa < 1$ holds over,
 \begin{align*}
   \alpha \in \left(\frac{|a| - \sqrt{a^2+4} }{2|a|}, \frac{|a|-\sqrt{a^2 - 4} } {2|a|}\right) \cup \left(\frac{|a| + \sqrt{a^2-4}}{2|a|}, \frac{|a|+\sqrt{a^2 + 4}}{2|a|}\right).
 \end{align*}
  \end{proof}
  
  We will now extend the above construction to show that over an instance of the pair $(A, B)$, the set $\ca K$ will have $2^{\floor{\frac{n}{2}}}$ connected components. We consider the case where $n$ is even in Lemma~\ref{lemma:exponential_components_discrete}. When $n$ is odd, we consider a block diagonal matrix by using $A$ with an even dimension of $n-1$ and a constant $0$ on its diagonal.
  \begin{lemma}
    \label{lemma:exponential_components_discrete}
    Suppose that 
    \begin{align*}
      A = \begin{pmatrix}
        0 & a_1^2 & & & & & & & \\
        0 & 0 & & & & & & &\\
        & & 0 & a_2^2 & & & && \\
        & & 0 & 0 & & & && \\
        & & & & \ddots & & && \\
        & & & & & \ddots &  & & \\
        & & & & & &  & 0&a_k^2 \\
        & & & & &  &  & 0&0 \\
      \end{pmatrix}, \qquad B = I,
    \end{align*} where $|a_j| >2$, for every $j=1, \dots, k$. Let $\ca U \subseteq \bb M_{n \times n}(\bb R)$ be a linear subspace defined by
    \begin{align*}
      \ca U = \{U \in \bb M_{n \times n}(\bb R): u_{12} = -a_1^2 u_{21}, \dots, u_{2k-1, 2k} = -a_k^2 u_{2k, 2k-1}\}.
    \end{align*}
    Then the set $\ca K =\{K \in \bb M_{n \times n}(\bb R): K \in \ca U, K \in \ca S_s\}$ has exactly $2^{k}$ connected components.
  \end{lemma}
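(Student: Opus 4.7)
The plan is to observe that the block structure of the problem decouples it into $k$ independent $2 \times 2$ instances, each of which is exactly Proposition~\ref{prop:discrete_2}; then invoke the standard fact that the number of path-connected components of a finite Cartesian product equals the product of the numbers of path-connected components.

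First I would exploit the three compatible block decompositions: $A$ is block-diagonal with $k$ diagonal blocks $A_j = \begin{pmatrix} 0 & a_j^2 \\ 0 & 0 \end{pmatrix}$, $B = I$ is block-diagonal, and the constraints defining $\ca U$ only couple the $(2j-1, 2j)$ and $(2j, 2j-1)$ entries for each $j$ (and force all other entries to $0$). Consequently every $K \in \ca U$ is block-diagonal with $2\times 2$ blocks $K_j = \begin{pmatrix} 0 & a_j^2 \alpha_j \\ -\alpha_j & 0 \end{pmatrix}$, parametrized by a single scalar $\alpha_j \in \mathbb{R}$. Therefore $A - BK = A - K$ is block-diagonal with blocks $A_j - K_j$.

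Next I would use the fact that for a block-diagonal matrix, the spectrum is the union of the spectra of the blocks, so
\begin{align*}
\rho(A - K) = \max_{1 \le j \le k} \rho(A_j - K_j).
\end{align*}
Hence the Schur stability condition $\rho(A - K) < 1$ is equivalent to $\rho(A_j - K_j) < 1$ for every $j$. Via the identification $K \leftrightarrow (\alpha_1, \dots, \alpha_k) \in \mathbb{R}^k$, the set $\ca K$ becomes the Cartesian product
\begin{align*}
\ca K \cong \prod_{j=1}^{k} \ca K_j,
\end{align*}
where each $\ca K_j \subseteq \mathbb{R}$ is precisely the set analyzed in Proposition~\ref{prop:discrete_2} with $a = a_j$. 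Since $|a_j| > 2$, that proposition yields exactly two connected components for each $\ca K_j$, each being an open interval.

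Finally I would apply the standard topological fact that if $X_1, \dots, X_k$ are topological spaces with $n_1, \dots, n_k$ path-connected components respectively (and each component is itself path-connected), then $\prod_j X_j$ has $\prod_j n_j$ path-connected components, since path-components of a product are products of path-components. Applied here with each $n_j = 2$, this gives exactly $2^k$ connected components of $\ca K$, completing the argument. There is no real obstacle in this proof; the only place to be a bit careful is to verify that the entries \emph{outside} the $2\times 2$ diagonal blocks of $K$ are actually forced to $0$ by the definition of $\ca U$, so that the decomposition of $A - K$ is genuinely block-diagonal and the product identification is exact.
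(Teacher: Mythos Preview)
Your proposal is correct and follows exactly the paper's own argument: identify $\ca K$ as the Cartesian product of the $k$ one-parameter sets from Proposition~\ref{prop:discrete_2}, each of which has two components, yielding $2^k$ in total. Your caveat about the off-block entries of $K$ needing to vanish is well placed---the intended definition of $\ca U$ (cf.\ the continuous-time analogue earlier in the paper) forces all entries outside the $2\times 2$ diagonal blocks to be zero, which is what makes the block-diagonal decoupling exact.
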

  \begin{proof}
    By Proposition~\ref{prop:discrete_2}, each block has exactly two connected components. We only need to observe that $\ca K$ is the Cartesian product of the intervals defined in Proposition~\ref{prop:discrete_2}.
  \end{proof}
  \section{Conclusion}
  \label{sec:conclusion}
In this paper, we have provided topological and metrical insights into the set of 
stabilizing state feedback gains and MIMO output feedback gains for continuous and discrete time LTI systems for unstructured and structured synthesis.
The motivation for this work stems from recent interest in devising learning type algorithms for
control synthesis, which evolve over the set of stabilizing feedback gains. 
This in turn, has inspired the need to further examine the topological properties of these sets.
We envisage that some of these properties might been observed in the earlier literature in the system theory and
known to experts;
however, this work is an attempt to gather and prove these
properties in a concise and rigorous manner using basic topology. This paper is an extension of our work on SISO LTI (output feedback) systems that more heavily relies on the theory of polynomials for its anlaysis~\cite{bu2019siso}.

\section*{Acknowledgments}
 The authors acknowledge their discussions with Maryam Fazel, Sham Kakade, and Rong Ge, exploring connections between control theory and learning. This research was supported by DARPA Lagrange Grant FA8650-18-2-7836.

\bibliographystyle{siamplain}
\bibliography{ref}

\end{document}